\documentclass[12pt, draftclsnofoot, onecolumn]{IEEEtran}
\usepackage{amsmath,cases}
\usepackage{amsthm}
\usepackage{amssymb}
\usepackage{cite}
\usepackage{amsfonts}
\usepackage{enumerate}
\usepackage{url}
\usepackage[final]{graphicx}
\usepackage{comment}

\newtheorem{lem}{Lemma}
\newtheorem{defn}{Definition}

\newtheorem{corollary}{Corollary}

\newtheorem{proposition}{Proposition}

\newcommand{\set}[1]{\mathcal{#1}}
\newcommand{\TBF}{\set{TBF}}
\newcommand{\GGC}{\set{GGC}}
\newcommand{\HCM}{\set{HCM}}
\newcommand{\ID}{\set{ID}}

\newcommand{\DivOrd}[1]{\ensuremath{D_{#1}}}
\newcommand{\snrgap}[1]{\ensuremath{g_{\mathrm{dB}}\left(#1 \right)}}
\newcommand{\cg}[1]{\ensuremath{G_{#1}}}
\newcommand{\codinggain}{array gain}

\newcommand{\orderl}[1]{\leq_{\rm #1}}
\newcommand{\Ind}[1]{I\left[ #1 \right]}

\newcommand{\CDF}[2]{\ensuremath{F_{#1}\left(#2\right)}}
\newcommand{\pdf}[2]{\ensuremath{f_{#1}\left(#2 \right)}}
\newcommand{\MGF}[2]{\ensuremath{\phi_{#1}\left(#2 \right)}}

\newcommand{\re}{\ensuremath{\mathbb{R}}}
\newcommand{\N}{\mathbb{N}}

\newcommand{\AvgPe}[2]{\ensuremath{\overline{P}_{\rm {e}}^{\rm #1}\left(#2 \right)}}

\newcommand{\E}[2]{\ensuremath{\mathbb{E}_{#1}\left[ #2 \right]}}

\newcommand{\D}{ \mathrm{d}}

\newcommand{\g}{\ensuremath{g}}


\newcommand{\cm}{c.m.}

\newcommand{\effch}{channel power}

\begin{document}

\title{A Unified Fading Model Using Infinitely Divisible Distributions}
\author{Adithya Rajan, Cihan Tepedelenlio\u{g}lu, \emph{Senior Member, IEEE}, Ruochen Zeng \thanks{The authors are with the School of Electrical, Computer, and Energy Engineering, Arizona
State University, Tempe, AZ 85287, USA. (Email:
\{arajan2, cihan, zengrc \}@asu.edu). This work was supported in part by the National Science Foundation under Grant CCF 1117041.} } 
\maketitle
\vspace{-2.0cm}
\begin{abstract}
This paper proposes to unify fading distributions by modeling the magnitude-squared of the instantaneous channel gain as an infinitely divisible random variable. A random variable is said to be infinitely divisible, if it can be written as a sum of $n \geq 1$ independent and identically distributed random variables, for each $n$. Infinitely divisible random variables have many interesting mathematical properties, which can be applied in the performance analysis of wireless systems. It is shown that the proposed unification subsumes several unifications of fading distributions previously proposed in the wireless communications literature. In fact, almost every distribution used to model multipath, shadowing and composite multipath/shadowing is shown to be included in the class of infinitely divisible random variables. 
\end{abstract}

\vspace{-8pt}
\section{Introduction}
\label{sec:Intro}
Wireless communications systems are primarily subject to two types of fading effects. Small-scale multipath fading, corresponds to short-term variation of the signal, as it is the result of constructive and destructive combinations of random reflections and scattering of signal components. Models such as Rayleigh, Rician and Nakagami-$m$ distributions have been proposed to capture the distribution of the envelope of the multipath fading random variable (RV). The second type of fading is called shadow fading, which captures the long-term variation of the mean signal level, and is typically modelled as a lognormal RV. In certain scenarios, a composite multipath/shadowing distribution is found to be appropriate, and is modeled as the product of a multipath fading RV and a shadowing RV. It is of interest to see if the typical distributions used for multipath, shadowing, and composite multipath/shadowing may be unified under a common class with desirable analytical properties. Through a unification of fading distributions, it may be possible to obtain canonical expressions for the performance metrics of fading channels, thereby simplifying performance analysis. The unified model may also permit the comparison of two different fading distributions with respect to system performance metrics such as the average symbol error rate (SER), using stochastic orders, which are partial orders on RVs. Therefore, it is compelling to develop a unified study of fading distributions with these goals in mind.

There have been multiple efforts toward unifying fading distributions in the past. Spherically invariant random processes have been proposed for the fading envelope as variance mixture of Rayleigh distributions \cite{yao2004unified}. An unification proposed in \cite{paper:alouini} models the instantaneous {\effch} RV as the product of a gamma and a generalized-gamma RVs. While this model encompasses many known fading distributions, it does not include some distributions such as the Pareto distribution, which has been used in \cite{pun07} to account for interference. Another unification in the literature is the generalized gamma distribution (also known as the Stacy distribution \cite{stacy1962generalization}), which is shown to correspond to the distribution of the envelope, when the received signal is composed of clusters of multipath components, and the receiver possesses power-law nonlinearity \cite{yacoub2007alpha}. Reference \cite{paper:Nstarnakagami} also proposes a fading model, however, it does not include some popularly used fading distributions such as the Rician distribution.
For the first time in the literature, we propose to unify fading distributions by modeling their instantaneous {\effch} as an infinitely divisible (ID) RV. A RV is said to be infinitely divisible, if it can be written as a sum of $n \geq 1$ independent and identically distributed (i.i.d.) RVs, for each $n$. Intuitively, this means that the channel can be viewed as a (time/frequency/antenna) diversity combining system with $n$ i.i.d. channels, for every $n \geq 1$. Infinitely divisible RVs have many interesting mathematical properties \cite{book:steutel2003infinite}, which are relevant in the performance analysis of wireless systems. In this paper, almost every distribution used to model multipath, shadowing and composite multipath/shadowing will be shown to be ID RVs. Most of these distributions are members of a special subclass of ID RVs known as generalized gamma convolutions (GGCs). Consequently, special attention is devoted to this subclass of ID RVs. GGCs are defined as the limits of sums of Gamma RVs, potentially with different parameters, and have been thoroughly studied in \cite{book:bondessonGGC, bondesson1988remarkable, bondesson1990generalized, james2008generalized}. There are numerous applications of GGCs in financial economics \cite{barndorff2001non, carr2001optimal, figueroa2004nonparametric} and risk analysis \cite{burmaster1996introduction, schroder2004risk}. GGCs possess remarkable closure properties under summations or products. These properties make GGCs an attractive model for multipath fading, shadowing or composite models that incorporate both. Despite possessing properties which are useful in fading system analysis, to the best of our knowledge, no applications of GGCs has been found in the wireless communications literature. 

The class of GGCs includes a surprisingly large number of fading models such as the lognormal, Rayleigh, Nakagami-$m$, generalized gamma (Stacy), Pareto, beta, inverse Gaussian and the positive $\alpha$-stable distributions. Moreover, composite multipath/shadowing models such as Rayleigh-lognormal, Nakagami-lognormal, and special cases of the spherically invariant random process, and the generalized model proposed in \cite{paper:alouini} are also GGCs. From a wireless perspective, if a system has GGC distributed instantaneous {\effch}, then the fading channel is equivalent to a coherent linear combination of (possibly infinitely) many independent branches where the instantaneous fading power is gamma distributed. Unifying instantaneous {\effch} distributions as GGCs, it is possible to obtain novel generalized expressions for performance metrics of wireless systems, such as the ergodic capacity and the average SER. Furthermore, this unification reveals new stochastic ordering relations applicable to members of GGC. Lastly, the conditions under which systems such as diversity combining schemes have GGC end-to-end instantaneous {\effch}, are obtained. This facilitates one to obtain bounds on the performance of such systems.

The rest of the paper is organized as follows. In Section \ref{sec:MathPreliminaries} provides the required relevant mathematical background. The system model and the performance metrics under consideration are given in Section \ref{sec:GGCSysmodel}. A short exposition of infinite divisibility highlighting applications in wireless communications is presented in Section \ref{sec:GGC_intro}. Section \ref{sec:GGCandHCMChannels} provides examples of fading distributions which are ID. Performance analysis of GGC fading channels is delegated to Section \ref{sec:GGC_performanceMetrics}. Section \ref{sec:OrderingGGCs} describes the stochastic ordering of GGC RVs. In Section \ref{sec:GGCPerfBoundsSys}, multi-antenna systems with maximum ratio combining (MRC), is considered. Finally, Monte-Carlo simulations are presented in Section \ref{sec:simulations}. 

\subsection{Notation and Conventions}
\label{sec:notations}
The set of real numbers and positive integers are denoted by $\re$ and $\N$ respectively, while all other sets are denoted using script font. For a set $\set{S}$, the indicator function $\Ind{x \in \set{S}} = 1$ if $x \in \set{S}$, and $0$ otherwise. For a Radon measure $\mu(\cdot)$ on $\re^{+}$, $\mu(u)$ is used to represent $\mu([0,u])$. For a RV $X$, the cumulative distribution function (CDF) and the probability density function (PDF) are denoted by $\CDF{X}{\cdot}$ and $\pdf{X}{\cdot}$ respectively. $\E{}{g(X)}$ is used to denote the expectation of the function $g(\cdot)$ over the PDF of $X$. The Laplace transform and Laplace exponent of a nonnegative RV $X$ are defined as $\MGF{X}{\cdot} = \E{}{\exp(-s X)}$ and $\gamma_{X}(s) = -\log \E{}{\exp(-s X)}$ respectively, which are defined for $s \geq 0$. All logarithms are natural logarithms. 
We write $f_{1}(x) \sim f_{2}(x)$, as $x \rightarrow a$ to indicate $\lim_{x \rightarrow a}f_{1}(x)/f_{2}(x) =1$. 
\section{Mathematical Preliminaries}
\label{sec:MathPreliminaries}
\subsection{Bernstein Functions}
A function $g: \re^{+} \rightarrow \re^{+}$ is said to be completely monotone (c.m.) if it has derivatives of all orders, which satisfy $(-1)^{k}\D^{k} g(x)/ \D x^{k} \geq 0$, for $k \in \N \cup  \lbrace 0 \rbrace$. According to Bernstein's theorem, every c.m. function is characterized as a mixture of decaying exponential functions \cite{book:schilling}. Many performance metrics of communication systems such as SERs of $M$-QAM and $M$-PSK are {\cm} functions of the average SNR \cite{paper:adithyaSER}.

If a function $g: \re^{+}\rightarrow \re^{+}$  is such that $\D g(x) / \D x$ is c.m., then it is known as a Bernstein function \cite{book:schilling}. Every Bernstein function can be written in the form
\begin{align}
\label{eqn:bern_characterize}
g(x) = a + bx  + \int\limits_{0}^{\infty} (1-\exp(-sx))\tau(\D s)\;,
\end{align}
for some $a,b \geq 0$, and a non-negative measure $\tau$ on $\re^{+}$ which satisfies $\int_{0}^{1} s \tau(\D s) + \int_{1}^{\infty} \tau( \D s) < \infty$. In this case, $\tau$ is known as the Levy measure of $g$. Bernstein functions are relevant to this work, because the Laplace exponents of non-negative infinitely divisible RVs introduced in Section \ref{sec:GGC_intro}, are Bernstein functions \cite[p. 37]{book:schilling}.

The subset of Bernstein functions, for which $s \tau(s)$ is c.m., constitutes the set of Thorin-Bernstein functions, $\TBF$. Any function in this set admits the following representation, in addition to \eqref{eqn:bern_characterize}:
\begin{align}
\label{eqn:TBF_characterize}
g(x) = a + bx  + \int\limits_{0}^{\infty} \log(1+s x) \mu (\D s)\;,
\end{align}
for some non-negative $a,b$, and a non-negative measure $\mu$ on $\re^{+}$ which satisfies $\int_{0}^{1}|\log s | \mu(\D s) + \int_{1}^{\infty}s \mu(\D s)< \infty$. In this case, $\mu$ is called the Thorin measure of $g$. In this work, we refer to $\int_{0}^{\infty} \mu(\D s)$ as the \emph{mass} of the Thorin measure, or the \emph{Thorin mass}. It will be seen that the Thorin mass of the instantaneous channel power is the diversity order. We will henceforth refer to the Thorin mass as diversity order.

An example of a Thorin-Bernstein function relevant to wireless communications is $g(x) = \log(1+x)$. This is because, if $\rho X$ is viewed as the instantaneous SNR of a channel, where $\rho$ is the average SNR, then $\E{}{g(\rho X)}$ represents the ergodic capacity of the channel.
\subsection{Regular Variation}
Some concepts from regular variation theory will be used in Section \ref{subsec:GGCGenGammaIntro}. Intuitively, regular variation captures asymptotic polynomial-like behavior. 

\begin{defn}

A real valued function $H(x) : \re^{+} \mapsto \re^{+}$ is said to be regularly varying at $\infty$ (at $0$) if $\lim_{x \rightarrow \infty} H(tx)/H(x)$ ($\lim_{x \rightarrow 0} H(tx)/H(x)$) exists, and is equal to $t^{r}$, with $0 < |r| < \infty$. It is said to be slowly varying at $\infty$ (or at $0$) if $r=0$. If $H$ is differentiable with $h(x) = \D H(x)/\D x$, and $h(x) = m x^{m-1} l(x)$, where $l$ is slowly varying at $0$, then $H(x) \sim x^{m} l(x)$, as $x \rightarrow 0$. 
\end{defn}

\subsection{Stochastic Ordering}
\label{sec:StochPreliminaries}
Stochastic orders are binary relations defined on probability distributions which capture intuitive notions like being larger or being more variable. A systematic development of stochastic ordering theory can be found in \cite{shakedbook94}. Some stochastic orders relevant to this paper are now presented. In what follows, we define stochastic orders on measures which might not have unit mass, and therefore are not necessarily probability measures, since this generality will be needed in Section \ref{sec:OrderingGGCs}.

\subsubsection{Convex Order}
\label{subsec:CXorder}
Let $X$ and $Y$ be two random variables such that 
\begin{align}
\E{}{\g(X)} \leq \E{}{\g(Y)}
\end{align}
for all convex functions $g$, provided the expectations exist. Then $X$ is said to be \emph{smaller than} $Y$ \emph{in the convex order} (denoted as $X \orderl{cx} Y$).

\subsubsection{Laplace Transform Order}
\label{subsec:LTorder}
Let $\mu$ be a nonnegative measure. Its Laplace transform is defined as $\int_{0}^{\infty} \exp(-s u)\mu(\D s)$, for $u \geq 0$. This is used to define the LT order between two nonnegative measures. If $\mu_{1}$ and $\mu_{2}$ are two nonnegative measures, then $\mu_{1}$ is said to be dominated by $\mu_{2}$ in the Laplace transform sense (denoted by $\mu_{1} \orderl{Lt} \mu_{2}$) whenever:
\begin{align}
\label{eqn:LT_expectations}
\int_{0}^{\infty} \exp(-\rho s) \mu_{1}(\D s) \geq \int_{0}^{\infty} \exp(-\rho s) \mu_{2}(\D s), \; \forall \rho \geq 0 \;.
\end{align}

This definition of the LT order is somewhat more general than definition found in \cite[Chapter 5]{shakedbook94}, which applies to only probability measures. When $\mu_{1}$ and $\mu_{2}$ are probability measures corresponding to RVs $X$ and $Y$ respectively, useful properties are obtained. $\E{}{\g(X)} \geq \E{}{\g(Y)}$ for all {\cm} functions $g(\cdot)$ is equivalent to $X \orderl{Lt} Y $ \cite[pp. 96]{shakedbook94}. A similar result with a reversal in the inequality states that $\E{}{\g(X)} \leq \E{}{\g(Y)}$ for all Bernstein functions $\g(\cdot)$ is equivalent to $X \orderl{Lt} Y$. 



\subsubsection{Shannon Transform Order}
In what follows, a generalized version of the ergodic capacity order \cite{paper:adithyaCapacityconf} applicable to order nonnegative measures is described. 

Let $\mu$ be a non-negative measure on $\re^{+}$. Its Shannon transform is defined as $\int_{0}^{\infty} \log (1+\rho/s) \mu (\D s)$. If $\mu_{1}$ and $\mu_{2}$ are two non-negative measures for which the Shannon transforms exist $\forall \rho \geq 0$, and are finite, then we say $\mu_{1}$ is dominated by $\mu_{2}$ in the Shannon transform sense, and write $\mu_{1} \orderl{S} \mu_{2}$ to denote $\int_{0}^{\infty} \log(1+\rho/s) \mu_{1} (\D s)  \leq \int_{0}^{\infty} \log (1+\rho/s) \mu_{2} (\D s)$, for all $\rho \geq 0$. 

Some properties of the Shannon transform order are the following:\\
\textbf{P1}: If $\int_{0}^{\infty}\mu_{1}(\D s) = \int_{0}^{\infty}\mu_{2}(\D s) < \infty$, then $\mu_{1} \orderl{Lt} \mu_{2} \implies \mu_{1} \orderl{S} \mu_{2}$\\
\textbf{P2}: $\mu_{1} \orderl{S} \mu_{2} \iff \int_{0}^{\infty} \sigma (s) \mu_{1}(\D s)  \leq \int_{0}^{\infty} \sigma (s) \mu_{2}(\D s)  $, for all $\lbrace \sigma \in \TBF | \sigma(0) = 0 \rbrace$.\\
\textbf{P3}: Let $\int_{0}^{\infty} \mu_{1}(\D s) \leq \int_{0}^{\infty} \mu_{2}(\D s)$. Then $\mu_{1} \orderl{S} \mu_{2} \iff \int_{0}^{\infty}  \sigma (s) \mu_{1}(\D s) \leq \int_{0}^{\infty} \sigma (s) \mu_{2}(\D s) $, for all $\sigma \in \TBF$.
\vspace{-8pt}
\section{System Model}
\label{sec:GGCSysmodel}
The system model under consideration is the following:
\begin{align}
\label{eqn:GGCsystemModel}
y = \sqrt{\rho X}b + w\;,
\end{align}
where $y$ is the received signal, $\rho \geq 0$ represents the average signal to noise ratio (SNR), $b$ is the input symbol to the fading channel, $w$ is circularly symmetric additive white Gaussian noise with zero mean and unit variance. In \eqref{eqn:GGCsystemModel}, $X$ represents the magnitude square of the complex baseband equivalent fading coefficient, and is called as the instantaneous {\effch}. We normalize $\E{}{X}=1$. Unifying the various distributions of $X$ under the assumptions of multipath fading, shadow fading and composite multipath/shadowing is the focus of this work. The unification permits generalized performance analysis of the system with respect to metric such as the average SER, which is defined in Section \ref{sec:GGCasympSER}. 

\section{Infinitely Divisible Fading Distributions}
\label{sec:GGC_intro}
\subsection{Infinite Divisibility}
A probability distribution is said to be infinitely divisible (ID) if, for each $n \geq 1$, it can be decomposed into $n$ identical convolution factors. Some examples of such distributions are the Gamma, the Poisson and the Gaussian distributions. In this paper, we refer to RVs with ID distributions as ID RVs. Non-negative ID RVs are characterized by the fact that the negative logarithm of the real-valued Laplace transform, i.e., the Laplace exponents of such RVs are Bernstein functions. The set of all nonnegative ID RVs is denoted by $\ID$. In a wireless communications context, the resulting effect of coherent combination of contributions from multiple different sources can be viewed as fading nonnegative ID RVs. However, even shadow fading distributions like the log-normal which arise from infinite products of RVs turn out to be infinitely divisible. Indeed, almost all known instantaneous {\effch} distributions are ID, with examples listed in Section \ref{sec:GGCandHCMChannels}. The infinite divisibility of the lognormal distribution, proved by Thorin \cite{thorin1977infinite} revealed a very rich class of ID probability distributions known as generalized gamma convolutions (GGC), which possess extremely useful mathematical properties applicable to wireless systems. In addition, several popularly used fading distributions are not only ID, but also GGC. This fact motivates us to consider this class of ID distributions, which is discussed next. 
\subsection{Generalized Gamma Convolutions}
\label{subsec:GGCGenGammaIntro}
The subset of non-negative ID distributions for which the Laplace exponent belongs to $\TBF$, constitutes the set $\GGC$ of GGC probability distributions. In other words, the LT of the RV admits the form 
\begin{align}
\label{eqn:MGF_GGC_gen}
\phi(s) = \exp \left(-as-\int\limits_{0}^{\infty}\log(1+s/u) \mu(\D u) \right)\;,
\end{align} 
where the parameter $a \geq 0$, and the nonnegative measure $\mu(\cdot)$ uniquely characterize the GGC distribution. In this paper, we call the Thorin measure of the Laplace exponent of the GGC (e.g. $\mu$ in \eqref{eqn:MGF_GGC_gen}) as the Thorin measure of the GGC. The degenerate GGC distribution is obtained with $\mu(\D u) = 0$. 

GGCs are closely linked with Gamma RVs having density
\begin{align}
\pdf{X}{x} = \theta^{D}x^{D-1}\exp(-\theta x)/\Gamma(D),
\end{align}
for $x \in \re^{+}$, where $\theta>0$ is known as the rate parameter, and $D>0$ is known as the shape parameter, denoted as $Gamma(D,\theta)$. In this context, a arbitrary GGC is a weak limit of a sequence of convolutions of gamma distributions with suitably chosen parameters. In wireless communications, the Thorin measure of a GGC has a physical interpretation: Consider the case where a continuum of fading channels with Gamma distributed instantaneous power is coherently summed up. If the sequence of shape parameters corresponding to each rate parameter $u$ in $[0,\infty)$ of the Gamma distributed channels is $\mu(u)$, then the Thorin measure of the GGC channel is $\mu$. 

In what follows, we will restrict our attention to nondegenerate GGCs with $a=0$, because $a$ merely represents a translation of the instantaneous {\effch} PDF, and performance analysis for such channels can be obtained from systems where $a=0$.

The variation exponent of the LT of the GGC near infinity will correspond to the diversity order, and this quantity relates to the Thorin measure as follows:
\begin{lem}{{\cite[Theorem 4.1.4]{book:bondessonGGC}}}
\label{lem:GGC_thorinmass}
The LT of a non-degenerate GGC is regularly varying near $\infty$, and the variation exponent equals the diversity order. That is, $\lim_{\rho \rightarrow \infty} \MGF{}{t \rho}/\MGF{}{\rho} = t^{-D }$, where
$
D = \int\limits_{0}^{\infty}\mu (\D s) 
$.
\end{lem}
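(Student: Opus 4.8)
The plan is to work directly from the canonical representation \eqref{eqn:MGF_GGC_gen} of the Laplace transform of the GGC (recalling our standing restriction $a = 0$), and to establish the limit, for each fixed $t > 0$, by an application of the monotone convergence theorem. Writing out the ratio,
\begin{align}
\frac{\MGF{}{t\rho}}{\MGF{}{\rho}} &= \exp\left( -\int_{0}^{\infty} \left[ \log\left(1 + \tfrac{t\rho}{u}\right) - \log\left(1+\tfrac{\rho}{u}\right) \right] \mu(\D u) \right) \nonumber \\
&= \exp\left( - \int_{0}^{\infty} \log \frac{u + t\rho}{u+\rho}\, \mu(\D u) \right) ,
\end{align}
so it would suffice to show $I(\rho) := \int_{0}^{\infty} \log\frac{u+t\rho}{u+\rho}\, \mu(\D u) \to D \log t$ as $\rho \to \infty$; continuity of $x \mapsto e^{-x}$ then gives the claim. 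I would first note that the integrability conditions on the Thorin measure make $\int_0^\infty \log(1+\rho/u)\,\mu(\D u)$ finite for each $\rho$, so that $\MGF{}{\rho} \in (0,1]$ and the ratio is well-defined.

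To handle $I(\rho)$, observe that for each fixed $u > 0$ the integrand $\log\frac{u+t\rho}{u+\rho}$ tends to $\log t$ as $\rho \to \infty$; the crux is upgrading this pointwise statement to convergence of the integrals. Differentiating in $\rho$ gives $\frac{\partial}{\partial\rho}\log\frac{u+t\rho}{u+\rho} = \frac{(t-1)u}{(u+t\rho)(u+\rho)}$, whose sign is exactly that of $t - 1$, while the integrand vanishes at $\rho = 0$. Hence for $t \ge 1$ the integrand increases monotonically from $0$ to $\log t$, and the monotone convergence theorem yields $I(\rho) \uparrow \log t \int_0^\infty \mu(\D u) = D\log t$; for $0 < t < 1$, the nonnegative function $-\log\frac{u+t\rho}{u+\rho}$ increases monotonically to $-\log t$, so the monotone convergence theorem gives $-I(\rho) \uparrow -D\log t$, i.e.\ $I(\rho) \downarrow D\log t$. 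Either way $\MGF{}{t\rho}/\MGF{}{\rho} \to e^{-D\log t} = t^{-D}$, which is exactly regular variation of $\MGF{}{\cdot}$ at $\infty$ with variation exponent $-D$. (The case $t = 1$ is trivial.)

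I do not expect a genuine obstacle; the only delicate point is the interchange of limit and integral. Dominated convergence would demand a $\mu$-integrable majorant, which fails precisely when $D = \int_0^\infty \mu(\D u) = \infty$ (as for the positive $\alpha$-stable laws), so it matters that the monotonicity of the integrand in $\rho$ lets us invoke monotone convergence instead — this also covers $D = \infty$, giving the ratio $\to 0$ for $t > 1$ and $\to \infty$ for $t < 1$, consistent with reading $t^{-\infty}$ with the usual conventions. One should also record that non-degeneracy of the GGC (together with $a = 0$) forces $\mu \not\equiv 0$, hence $D > 0$, excluding the trivial constant law $\MGF{}{\cdot} \equiv 1$.
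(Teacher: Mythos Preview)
The paper does not supply its own proof of this lemma; it simply quotes the result from Bondesson's monograph \cite[Theorem~4.1.4]{book:bondessonGGC}. Your argument is therefore not being compared against a proof in the paper but stands on its own, and it is correct: writing the ratio as $\exp\bigl(-\int_0^\infty \log\frac{u+t\rho}{u+\rho}\,\mu(\D u)\bigr)$ and observing that the integrand is monotone in $\rho$ with limit $\log t$ is exactly the right move, and the split into $t\ge 1$ and $t<1$ cleanly handles the sign so that monotone convergence applies in both cases (and, as you note, even when $D=\infty$). The derivative computation $\partial_\rho \log\frac{u+t\rho}{u+\rho}=\frac{(t-1)u}{(u+t\rho)(u+\rho)}$ is correct, the integrand indeed vanishes at $\rho=0$, and the finiteness of $\int_0^\infty \log(1+\rho/u)\,\mu(\D u)$ for each fixed $\rho$ follows from the Thorin-measure integrability conditions, so the ratio is well defined throughout. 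Nothing is missing.
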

The variation exponent of the LT of a GGC near $\infty$ will have applications in asymptotic performance analysis, as discussed in Section \ref{sec:GGC_performanceMetrics}. The Thorin measure also determines the expected value of a GGC RV:
\begin{proposition}
\label{thm:AvgPwr}
Let $X$ be a GGC. Then $\E{}{X} = \int_{0}^{\infty} 1/u \mu(\D u) $, if the integral is finite.
\end{proposition}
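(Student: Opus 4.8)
The plan is to read off $\E{}{X}$ from the behavior of the Laplace transform near the origin. For any nonnegative RV one has the elementary identity $\E{}{X} = \lim_{s \downarrow 0}\left(1 - \MGF{X}{s}\right)/s$, which holds because $\left(1 - e^{-sX}\right)/s$ increases pointwise to $X$ as $s \downarrow 0$, so that monotone convergence yields that this limit equals $\E{}{X} \in [0,\infty]$. Since we have restricted attention to GGCs with $a = 0$, \eqref{eqn:MGF_GGC_gen} gives $\MGF{X}{s} = \exp(-\gamma_{X}(s))$ with $\gamma_{X}(s) = \int_{0}^{\infty} \log(1 + s/u)\,\mu(\D u)$, and $\gamma_{X}(s) \downarrow 0$ as $s \downarrow 0$ (because $\MGF{X}{s} \to \MGF{X}{0} = 1$). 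For nondegenerate $\mu$ we have $\gamma_{X}(s) > 0$ for $s > 0$, so we may write
\[
\frac{1 - \MGF{X}{s}}{s} \;=\; \frac{1 - e^{-\gamma_{X}(s)}}{\gamma_{X}(s)}\cdot\frac{\gamma_{X}(s)}{s},
\]
in which the first factor tends to $1$ as $s \downarrow 0$ (the degenerate case $\mu = 0$ is trivial, both sides of the claim being $0$).

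It therefore remains to show $\gamma_{X}(s)/s \to \int_{0}^{\infty} (1/u)\,\mu(\D u)$ as $s \downarrow 0$. Writing
\[
\frac{\gamma_{X}(s)}{s} \;=\; \int_{0}^{\infty} \frac{1}{u}\cdot\frac{\log(1 + s/u)}{s/u}\,\mu(\D u),
\]
I use that $t \mapsto \log(1+t)/t$ is decreasing on $(0,\infty)$ with $\log(1+t)/t \uparrow 1$ as $t \downarrow 0$; hence, for each fixed $u > 0$, the integrand increases to $1/u$ as $s \downarrow 0$. Monotone convergence then gives $\lim_{s \downarrow 0} \gamma_{X}(s)/s = \int_{0}^{\infty}(1/u)\,\mu(\D u) \in (0,\infty]$. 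Combining with the previous paragraph, $\E{}{X} = \int_{0}^{\infty}(1/u)\,\mu(\D u)$; in particular the expectation is finite if and only if the integral is, which is the statement.

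The only thing requiring care is the rigor of the two limit interchanges, and both reduce to the monotone convergence theorem once the relevant monotonicity in $s$ is noted, so no real obstacle arises. It is perhaps worth remarking that the result is consistent with the constraint on the Thorin measure: near $\infty$ one has $1/u \leq 1$ and $\int_{1}^{\infty}\mu(\D u) \leq \int_{1}^{\infty} s\,\mu(\D s) < \infty$, so the integral $\int_{0}^{\infty}(1/u)\,\mu(\D u)$ can fail to be finite only through mass of $\mu$ near $0$, where the characterization of $\TBF$ only guarantees $\int_{0}^{1}|\log u|\,\mu(\D u) < \infty$ --- which is why finiteness must be assumed.
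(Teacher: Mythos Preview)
Your proof is correct and follows the same overall strategy as the paper: extract $\E{}{X}$ from the behavior of the Laplace transform near $s=0$, reduce to the Laplace exponent $\gamma_{X}$, and then interchange a limit with the integral over the Thorin measure. The paper carries this out by writing $\E{}{X} = -\lim_{\rho \to 0}\MGF{X}{\rho}'$, differentiating under the integral sign to obtain $\int_{0}^{\infty}\frac{1/u}{1+\rho/u}\,\mu(\D u)$, and then invoking dominated convergence (with dominating function $1/u$) under the hypothesis $\int_{0}^{\infty}u^{-1}\mu(\D u)<\infty$. Your version replaces this with difference quotients and two applications of the monotone convergence theorem, which is slightly more elementary: it avoids having to justify differentiation under the integral sign, and it yields the ``if and only if'' direction (finiteness of $\E{}{X}$ is equivalent to finiteness of $\int_{0}^{\infty}u^{-1}\mu(\D u)$) for free, whereas the paper's dominated-convergence argument only delivers the ``if'' direction as stated.
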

\begin{IEEEproof}
It is easy to show for any nonnegative RV that $\E{}{X} = -\lim_{\rho\rightarrow 0} \D \MGF{X}{\rho}/\D \rho$. Since the LT of a GGC is of the form $\MGF{X}{\rho} = \exp(-\gamma(\rho))$, $\E{}{X} = \lim_{\rho \rightarrow 0} \D \gamma(\rho)/\D \rho$, where $\gamma(\cdot)$ is the Laplace exponent of the GGC. Representing $\gamma$ in terms of the Thorin measure, 
\begin{align}
\E{}{X} &= \lim_{\rho \rightarrow 0} \frac{\D \int\limits_{0}^{\infty} \log(1+s/u) \mu(\D u)}{\D \rho} \nonumber \\
& = \lim_{\rho \rightarrow 0} \int\limits_{0}^{\infty} \frac{1/u}{1+\rho/u} \mu(\D u)\;,
\end{align}
where $\mu$ is the Thorin measure. The limit and integral can be interchanged whenever 
$
\int_{0}^{\infty} u^{-1} \mu(\D u) < \infty 
$, because under this condition dominated convergence is satisfied. This completes the proof.
\end{IEEEproof} 
If $X$ represents the instantaneous {\effch} in a fading system, then Proposition \ref{thm:AvgPwr} gives the average channel power of the system. As mentioned in Section \ref{sec:GGCSysmodel}, we assume that the mean of the GGC exists and is normalized to $1$, as per the discussion after \eqref{eqn:GGCsystemModel}. In doing so, we ensure that $\rho$ in \eqref{eqn:GGCsystemModel} reflects the average signal to noise ratio of the system.

It is straightforward to see that, if $X_{1}$ and $X_{2}$ are two GGCs with Thorin measures $\mu_{1}$ and $\mu_{2}$, then $X_{1}+X_{2}$ is a GGC with a Thorin measure $\mu_{1}+\mu_{2}$.  

Some useful properties of non-degenerate GGCs are now presented.

The density of a GGC always exists \cite[Theorem 4.1.2, Theorem 4.1.3]{book:bondessonGGC}, and can be written in terms of {\cm} functions as follows:
\begin{lem}
\label{prop:GGC_density}{{\cite[Theorem 4.1.1, Theorem 4.1.2]{book:bondessonGGC}}}
The density of a GGC with diversity order $D < \infty$ can be written in the form 
\begin{align}
\label{eqn:GGCPDF_canonical}
\pdf{X}{x} = x^{D -1}h(x) \;,
\end{align}
where $h(\cdot)$ is a c.m. function, which is slowly varying at $0$. Moreover, $h(0+)$ is finite if and only if $\int_{1}^{\infty} \log u \mu(\D u) < \infty$, and in this case 
\begin{align}
\label{eqn:hofzero}
h(0+) = \frac{1}{\Gamma(D)} \exp\left( \int\limits_{0}^{\infty} \log u \mu(\D u) \right) \;.
\end{align}
\end{lem}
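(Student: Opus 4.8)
The plan is to prove the two assertions separately: first that $h(x):=x^{1-D}\pdf{X}{x}$ is completely monotone, and then --- using Lemma~\ref{lem:GGC_thorinmass} --- that $h$ is slowly varying at $0$ with $h(0+)$ obeying the stated dichotomy and value \eqref{eqn:hofzero}. Throughout I write $\gamma(s)=\int_{0}^{\infty}\log(1+s/u)\,\mu(\D u)$ for the Laplace exponent, so that $\MGF{X}{s}=\exp(-\gamma(s))$ by \eqref{eqn:MGF_GGC_gen} with $a=0$, and $D=\int_{0}^{\infty}\mu(\D u)\in(0,\infty)$.

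\emph{Complete monotonicity of $h$.} I would first treat the case where $\mu=\sum_{j=1}^{k}c_{j}\delta_{u_{j}}$ is a finite sum of atoms, $0<u_{1}<\cdots<u_{k}$, $\sum_{j}c_{j}=D$; then by \eqref{eqn:MGF_GGC_gen} the GGC is the law of an independent sum $\sum_{j}G_{j}$ with $G_{j}\sim Gamma(c_{j},u_{j})$. Induct on $k$ with the hypothesis that a partial sum $S$, whose shapes sum to $D'$, has density $x^{D'-1}\psi(x)$ where $\psi$ is the Laplace transform of a nonnegative measure supported on some $[a,b]\subset(0,\infty)$; the base case is a single Gamma density, and for the step, convolving $\pdf{S}{\cdot}$ with the $Gamma(c,u)$ density and substituting $y=xt$ in the convolution integral gives
\[
\pdf{S+G}{x}=x^{D'+c-1}\frac{u^{c}}{\Gamma(c)}\int_{0}^{1}t^{D'-1}(1-t)^{c-1}\psi(xt)e^{-ux(1-t)}\,\D t .
\]
Writing $\psi(xt)=\int_{a}^{b}e^{-xtv}\,\nu(\D v)$, the exponent is $-x\bigl(tv+u(1-t)\bigr)$ with $tv+u(1-t)$ ranging over $[\min(a,u),\max(b,u)]$ as $t\in[0,1]$, $v\in[a,b]$, so $x^{1-(D'+c)}\pdf{S+G}{x}$ is again the Laplace transform of a nonnegative measure supported on $[\min(a,u),\max(b,u)]$, hence completely monotone. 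Iterating over the $k$ atoms, $\pdf{X}{x}=x^{D-1}h(x)$ with $h$ completely monotone, and evaluating the iterated integral at $x=0$ (the beta integrals telescope) gives $h(0+)=\Gamma(D)^{-1}\prod_{j}u_{j}^{c_{j}}=\Gamma(D)^{-1}\exp\bigl(\int_{0}^{\infty}\log u\,\mu(\D u)\bigr)$, i.e.\ \eqref{eqn:hofzero}. For a general Thorin measure, I would approximate $\mu$ by finitely supported $\mu_{n}$ --- truncate to $[1/n,n]$, move the discarded mass to the endpoints, then discretise $\mu|_{[1/n,n]}$ --- keeping $\int_{0}^{\infty}\mu_{n}(\D u)=D$ and arranging $\gamma_{n}(s)\to\gamma(s)$ for each $s\ge 0$ and $\int_{0}^{\infty}\log u\,\mu_{n}(\D u)\to\int_{0}^{\infty}\log u\,\mu(\D u)$; the endpoint contributions vanish because $\int_{0}^{1}|\log u|\,\mu(\D u)<\infty$ and $\int_{1}^{\infty}u^{-1}\mu(\D u)\le D<\infty$. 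Then $X_{n}\Rightarrow X$, each $\pdf{X_{n}}{x}=x^{D-1}h_{n}(x)$ with $h_{n}$ completely monotone, and a pointwise-convergent subsequence of $(h_{n})$ has a completely monotone limit $h$ with $x^{D-1}h(x)=\pdf{X}{x}$, identified by matching the two weak limits of $x^{D-1}h_{n}(x)\,\D x$ on compact subintervals of $(0,\infty)$.

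\emph{Slow variation and the value of $h(0+)$.} By Lemma~\ref{lem:GGC_thorinmass}, $\MGF{X}{s}=s^{-D}L(s)$ with $L$ slowly varying at $\infty$. Karamata's Tauberian theorem gives $\CDF{X}{x}\sim\Gamma(D+1)^{-1}x^{D}L(1/x)$ as $x\to 0^{+}$; since $h$ is completely monotone, hence nonincreasing, sandwiching $\CDF{X}{x}-\CDF{X}{\lambda x}=\int_{\lambda x}^{x}t^{D-1}h(t)\,\D t$ between $h(x)$ and $h(\lambda x)$ times $\int_{\lambda x}^{x}t^{D-1}\,\D t$ and letting $x\to0$ (then $\lambda\to1$) upgrades this to $\pdf{X}{x}\sim\Gamma(D)^{-1}x^{D-1}L(1/x)$, i.e.\ $h(x)\sim\Gamma(D)^{-1}L(1/x)$; hence $h$ is slowly varying at $0$ and $h(0+)=\Gamma(D)^{-1}\lim_{s\to\infty}s^{D}\MGF{X}{s}$. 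To evaluate that limit I would split $\gamma(s)$ at $u=s$: on $(s,\infty)$, $0\le\log(1+s/u)\le s/u\le1$, so that part is at most $\mu((s,\infty))\to0$; on $(0,s]$, $\log(1+s/u)=\log s-\log u+\log(1+u/s)$, the last term integrating to $o(1)$; hence $\gamma(s)=D\log s-\log s\,\mu((s,\infty))-\int_{0}^{s}\log u\,\mu(\D u)+o(1)$. Since $0\le\log s\,\mu((s,\infty))\le\int_{s}^{\infty}\log u\,\mu(\D u)$, this gives $s^{D}\MGF{X}{s}\to\exp\bigl(\int_{0}^{\infty}\log u\,\mu(\D u)\bigr)$ when $\int_{1}^{\infty}\log u\,\mu(\D u)<\infty$ and $s^{D}\MGF{X}{s}\to\infty$ otherwise --- precisely the asserted dichotomy and \eqref{eqn:hofzero}.

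\emph{Main obstacle.} The delicate step is the limit in the first part when $\int_{1}^{\infty}\log u\,\mu(\D u)=\infty$: then $h(0+)=\infty$, the $h_{n}$ are not uniformly bounded near $0$, and pinning the subsequential limit down as $\pdf{X}{x}$ requires local equi-boundedness of $(h_{n})$ on compact subsets of $(0,\infty)$ together with tightness of $(X_{n})$, which needs a separate argument for the family of GGC densities. This can also be circumvented by inverting $\MGF{X}{\cdot}$ directly --- deforming the Bromwich contour onto the branch cut $(-\infty,-\inf\mathrm{supp}\,\mu]$ of $\MGF{X}{\cdot}$, which produces an explicit integral formula for $\pdf{X}{x}$ as an exponential mixture --- but that route requires separate treatment of $D\le1$ versus $D>1$ and of the contributions from the arcs at infinity, so I would keep the approximation argument as the main line.
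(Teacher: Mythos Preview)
The paper does not prove this lemma; it is quoted verbatim from Bondesson's monograph (Theorems 4.1.1--4.1.2), so there is no in-paper argument to compare against. What the paper \emph{does} record, immediately afterwards as Lemma~\ref{lem:GGC_gammaMixture}, is Bondesson's actual mechanism: every GGC with finite Thorin mass $D$ is the law of $AZ$ with $Z\sim Gamma(D,1)$ independent of $A\ge 0$, whence $\pdf{X}{x}=x^{D-1}h(x)$ with $h(x)=\Gamma(D)^{-1}\E{}{A^{-D}e^{-x/A}}$, which is completely monotone by Bernstein's theorem, and $h(0+)=\Gamma(D)^{-1}\E{}{A^{-D}}$. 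The dichotomy and formula \eqref{eqn:hofzero} then reduce to evaluating $\E{}{A^{-D}}$, which Bondesson does by reading off $s^{D}\MGF{X}{s}=\E{}{(1+sA)^{-D}}\to\E{}{A^{-D}}$ as $s\to\infty$ (monotone convergence) and matching with the Thorin representation --- essentially the same $s^{D}\MGF{X}{s}$ computation you carry out in your second part.

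Your route is genuinely different for the first assertion: instead of invoking the mixture representation, you build $h$ by induction on finite-atom Thorin measures and then pass to the limit. The inductive convolution identity and the beta-integral telescoping for $h(0+)$ are correct and pleasant. What this buys you is a self-contained argument that does not presuppose the $X=AZ$ structure; what it costs is exactly the obstacle you flag --- promoting weak convergence $X_{n}\Rightarrow X$ to pointwise convergence of $h_{n}$ on $(0,\infty)$. That step can be closed (your $h_{n}$ are Laplace transforms of measures $\nu_{n}$ supported in $[u_{1}^{(n)},u_{k_{n}}^{(n)}]$; tightness of $(\nu_{n})$ on compacts away from $0$ follows from the uniform bound $\int_{0}^{1}x^{D-1}h_{n}(x)\,\D x\le 1$, and Helly plus identification via $\MGF{X_{n}}{s}\to\MGF{X}{s}$ finishes it), but it is noticeably heavier than the one-line mixture argument. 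Your second part --- Karamata's Tauberian theorem plus the monotone-density sandwich, followed by the $\gamma(s)=D\log s-\int_{0}^{s}\log u\,\mu(\D u)+o(1)$ splitting --- is correct and is essentially how Bondesson obtains \eqref{eqn:hofzero} as well.
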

The representation of the PDF of a GGC in the form \eqref{eqn:GGCPDF_canonical} is particularly useful to obtain asymptotic performance metrics as shown in Section \ref{sec:GGCasympSER}. GGCs are also mixtures of Gamma distributions, as summarized in the following Lemma:
\begin{lem}{{\cite[Theorem 4.1.1]{book:bondessonGGC}}}
\label{lem:GGC_gammaMixture}
If $X$ is a GGC with diversity order $D < \infty$, then $X = A Z$, where $Z \sim Gamma(D,1)$ and $A \geq 0$ are independent. Moreover, $h(x)$ in \eqref{eqn:GGCPDF_canonical} is given by 
\begin{align}
\label{eqn:GGCGammahxandA}
h(x) = (\Gamma(D))^{-1}\E{}{A^{-D}\exp(-x/A)}\;.
\end{align}
\end{lem}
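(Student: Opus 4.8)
The plan is to push the canonical density representation of Lemma~\ref{prop:GGC_density} through Bernstein's theorem and then read off a scale mixture of the $Gamma(D,1)$ law. Since $D<\infty$, Lemma~\ref{prop:GGC_density} gives $\pdf{X}{x} = x^{D-1}h(x)$ with $h$ completely monotone on $\re^{+}$, so by Bernstein's theorem there is a nonnegative measure $\nu$ on $\re^{+}$ with $h(x) = \int_{0}^{\infty}\exp(-xt)\,\nu(\D t)$ for $x>0$. Everything hinges on reinterpreting the product $x^{D-1}\exp(-xt)$ as an unnormalized $Gamma(D,t)$ density.

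First I would write, using $x^{D-1}\exp(-xt) = \Gamma(D)\,t^{-D}\cdot\big(t^{D}x^{D-1}\exp(-xt)/\Gamma(D)\big)$ together with Tonelli's theorem (all integrands nonnegative),
\begin{align}
\pdf{X}{x} = \int_{0}^{\infty}\frac{t^{D}x^{D-1}\exp(-xt)}{\Gamma(D)}\,G(\D t), \qquad G(\D t) := \frac{\Gamma(D)}{t^{D}}\,\nu(\D t)\;.
\end{align}
Integrating in $x$ over $\re^{+}$ and interchanging integrals (again by nonnegativity) yields $\int_{0}^{\infty}G(\D t)=1$, so $G$ is a probability measure; this also forces $\nu$ to place no mass at the origin and gives $\int_{0}^{\infty}t^{-D}\nu(\D t)<\infty$. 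Letting $T$ have law $G$, independent of $Z\sim Gamma(D,1)$, the display above is exactly the statement $X \stackrel{d}{=} Z/T$, since $Z/t$ has density $t^{D}x^{D-1}\exp(-xt)/\Gamma(D)$. Setting $A := 1/T \geq 0$ then yields $X = AZ$ with $A$ and $Z$ independent.

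To identify $h$, I would condition on $A$: given $A=a$, the variable $aZ$ has the $Gamma(D,1/a)$ density $a^{-D}x^{D-1}\exp(-x/a)/\Gamma(D)$, so
\begin{align}
\pdf{X}{x} = \E{}{\frac{A^{-D}x^{D-1}\exp(-x/A)}{\Gamma(D)}} = x^{D-1}\,\frac{1}{\Gamma(D)}\,\E{}{A^{-D}\exp(-x/A)}\;.
\end{align}
Comparing with $\pdf{X}{x} = x^{D-1}h(x)$ from \eqref{eqn:GGCPDF_canonical} gives \eqref{eqn:GGCGammahxandA}; as a consistency check one has directly $\E{}{A^{-D}\exp(-x/A)} = \int_{0}^{\infty}t^{D}\exp(-xt)\,G(\D t) = \Gamma(D)\,h(x)$ by the definition of $G$.

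The one delicate point is the pair of Tonelli interchanges and, with them, the verification that the mixing measure $G$ has unit mass; this is precisely where the hypothesis $D<\infty$ and the fact (supplied by Lemma~\ref{prop:GGC_density}) that $X$ genuinely possesses a probability density enter. The remaining properties of $h$ used elsewhere — complete monotonicity and slow variation at $0$ — are already given by Lemma~\ref{prop:GGC_density} and need not be revisited here, so I do not anticipate any obstacle beyond this bookkeeping.
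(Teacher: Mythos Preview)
The paper does not supply its own proof of this lemma; it is quoted verbatim as \cite[Theorem~4.1.1]{book:bondessonGGC} and left unproved. Your argument is correct and is essentially the classical one: start from the canonical density $x^{D-1}h(x)$ with $h$ completely monotone (Lemma~\ref{prop:GGC_density}), invoke Bernstein's theorem to write $h$ as a Laplace transform of a measure $\nu$, and then recognise the resulting integral representation as a scale mixture of $Gamma(D,1)$ densities, with the mixing law forced to be a probability measure by integrating the density to~$1$. This is exactly the route taken in Bondesson's monograph, so there is nothing to contrast; your handling of the two Tonelli steps and the exclusion of mass at the origin for $\nu$ is the only subtlety, and you have addressed it.
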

Interpreting $X$ as the instantaneous {\effch} of a GGC channel, from Lemma \ref{lem:GGC_gammaMixture} it is straightforward to see that every GGC channel with finite diversity order is equivalent to a variance mixture of a Nakagami-$m$ channel. This is because the instantaneous {\effch} corresponding to Nakagami-$m$ fading is Gamma distributed \cite{tega04}. It is shown in Section \ref{sec:GGCandHCMChannels} that several fading distributions such as Nakagami-$m$, Pareto and generalized gamma distributions have instantaneous {\effch}s which are GGC RVs, and the corresponding diversity orders are finite.     
 
\subsection{Hyperbolically Completely Monotone RVs}
A subset of $\GGC$, consisting of the distributions for which the density is a hyperbolically completely monotone (HCM) function is denoted by $\HCM$ ($\HCM \subset \GGC \subset \ID$), and the RVs corresponding to distributions in this set are called HCM RVs. A function $g$ is defined to be a HCM function if $g(uv)g(u/v)$ can be written as \cite[p. 68]{book:bondessonGGC}
\begin{align}
g(uv)g(u/v) = \int\limits_{0}^{\infty}\exp(-\lambda u (v+v^{-1}))K(\D \lambda ; u)\;, K(\D \lambda ; u) \geq 0\;,
\end{align}
for each $u>0$. HCM RVs enjoy the following properties under products and quotients of RVs, which will be useful in obtaining closure results for wireless systems involving multiple fading RVs:
\begin{lem}
\label{lem:props_GGC_HCM}
If $X_{i},i=1,2,\dots$ are independent HCM RVs, and $Y$ is a GGC independent of $X_{i}$, then
\begin{enumerate}[(i)]
\item $X_{1}^{q} \in  \HCM$, for $|q| \geq 1$.
\item $X_{1}X_{2} \in \HCM$.
\item $X_{1} Y \in \GGC$.
\item $\sum_{j}\prod_{i}X_{i,j}^{q_{i,j}} \in \GGC$, for $|q_{i,j}| \geq 1$.
\end{enumerate}
\end{lem}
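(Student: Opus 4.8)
The plan is to deduce all four parts from closure properties of hyperbolically completely monotone (HCM) functions proved in \cite{book:bondessonGGC}, together with the two facts about $\GGC$ already recorded in this section: that $\GGC$ is closed under independent summation, with Thorin measures adding, and --- a standard fact from \cite{book:bondessonGGC} --- that $\GGC$ is closed under weak convergence. In this scheme parts (i)--(iii) are genuine theorems of \cite{book:bondessonGGC} that I would cite, and part (iv) is a short assembly of them.

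Parts (i) and (ii) concern HCM densities only; the governing ideas behind the cited results are the following. For (ii), if $f_{1},f_{2}$ are the HCM densities of $X_{1},X_{2}$, then $X_{1}X_{2}$ has the Mellin-convolution density $f(x)=\int_{0}^{\infty}f_{1}(x/t)f_{2}(t)\,t^{-1}\,\D t$; forming $f(uv)f(u/v)$ and regrouping the two copies of $f_{1}$ inside the resulting double integral as $f_{1}(UV)f_{1}(U/V)$ with $U=u/\sqrt{ts}$ and $V=v\sqrt{s/t}$ exhibits the integrand as a completely monotone function of $v\sqrt{s/t}+v^{-1}\sqrt{t/s}$, which after symmetrizing in $(t,s)$ is completely monotone in $v+v^{-1}$, a property preserved by integration against the nonnegative kernel $f_{2}(t)f_{2}(s)(ts)^{-1}$; finitely many factors then follow by induction. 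For (i), the case $q=-1$ is the short direct verification that $x^{-2}f(x^{-1})$ satisfies the HCM relation whenever $f$ does, so only $q\ge 1$ needs attention; there one uses $f_{X_{1}^{q}}(y)=q^{-1}y^{1/q-1}f_{X_{1}}(y^{1/q})$, the fact that $y\mapsto y^{a}$ is HCM for every real $a$, and closure of HCM under products, reducing matters to the assertion that $f_{X_{1}}(y^{\beta})$ is HCM for $\beta=1/q\in(0,1]$, which rests in turn on $v^{\beta}+v^{-\beta}-2$ being a Bernstein function of $v+v^{-1}-2$ when $0<\beta\le 1$.

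Part (iii) is precisely the theorem that a product of an independent HCM random variable and a $\GGC$ random variable is $\GGC$ \cite{book:bondessonGGC}, which I would cite directly; one may first write $Y=AZ$ with $Z\sim Gamma(D,1)$ using Lemma~\ref{lem:GGC_gammaMixture} and reduce $X_{1}Y$ to $(X_{1}A)Z$, but the essential input remains that closure result. Part (iv) is then assembly: by (i) each $X_{i,j}^{q_{i,j}}$ is HCM; by the finite iteration of (ii) each finite product $\prod_{i}X_{i,j}^{q_{i,j}}$ is HCM, hence lies in $\GGC$ since $\HCM\subset\GGC$; and since $\GGC$ is closed under independent summation, every finite sum $\sum_{j}\prod_{i}X_{i,j}^{q_{i,j}}$ is $\GGC$. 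When the indices range over countably infinite sets, one truncates both the products and the sum to finite ones, each $\GGC$ by the above, and passes to the weak limit using closure of $\GGC$ under weak convergence; the only step invoking the integrability hypotheses implicit in the statement is that these truncations converge to an almost-surely finite and positive random variable.

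The main obstacle is concentrated in parts (i) and (ii): the complete-monotonicity arguments behind HCM-stability under $|q|\ge 1$ powers and under products --- in particular that $v^{\beta}+v^{-\beta}$ is, up to the shift by $2$, a Bernstein function of $v+v^{-1}$, and that the symmetrized Mellin-convolution integrand stays completely monotone in $v+v^{-1}$ --- are genuinely delicate, which is why I would cite them from \cite{book:bondessonGGC} rather than reprove them. Granting those (and the $\HCM\times\GGC$ closure for (iii)), parts (iii) and (iv) --- the content specific to this paper --- follow in a few lines from the $\GGC$ closure properties already at hand.
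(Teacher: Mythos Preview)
The paper does not supply a proof of this lemma at all: it is stated without an \texttt{IEEEproof} environment and, like the other lemmas in the same section, is drawn from \cite{book:bondessonGGC}. Your proposal is therefore not competing against a proof in the paper --- it is supplying one where the paper only cites. Your plan is correct: parts (i)--(iii) are indeed theorems in Bondesson's monograph (Chapter~5 there), and your assembly of part (iv) from (i), (ii), $\HCM\subset\GGC$, closure of $\GGC$ under independent sums, and closure under weak limits is the natural and valid route.

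One small caveat on your remark under (iii): invoking Lemma~\ref{lem:GGC_gammaMixture} to write $Y=AZ$ with $Z\sim Gamma(D,1)$ presupposes that the diversity order $D$ of $Y$ is finite, which is not assumed in the lemma statement (and fails, e.g., for lognormal $Y$). You already acknowledge this by falling back on the direct citation of the $\HCM\times\GGC$ closure result, so the argument stands; just be aware that the factorization detour is not a full alternative proof of (iii) in the generality stated.
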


\subsection{Examples of ID Fading Distributions}
\label{sec:GGCandHCMChannels}
In this work, we view instantaneous {\effch} distributions as members of $\ID$. To justify that the $\ID$ class is indeed a unification of fading distributions, many of the popularly used fading channels are shown to have ID instantaneous {\effch}s. Among these channels, distributions for which the instantaneous {\effch} is a GGC or HCM RV are identified, as this fact will be useful in the analysis of systems, where the overall instantaneous {\effch} RV involves a sum or product of two or more RVs (See Section \ref{sec:GGCPerfBoundsSys}). Explicit examples of RVs which are ID, GGC and HCM are now listed.
\subsubsection{Rician-$K$ Fading}
\label{sec:GGCRicianexample}
Rician-$K$ distribution is used to model the fading envelope in certain line-of-sight scenarios \cite[p. 21]{simon_alouini05}. For this case, if $X$ is a nonnegative RV corresponding to the instantaneous {\effch}, then $\sqrt{X}$ is Rice distributed with parameter $K$. The LT of $X$ is given by \cite[p. 19]{simon_alouini05}
\begin{align}
\label{eqn:GGCRicianIDstep1}
\MGF{X}{s} = \frac{1+K}{1+K+s} \exp\left( - \frac{K s}{1+K+s}\right)\;.
\end{align}
It is now shown that $X$ is an ID RV. To begin with, the Laplace exponent of $X$ is obtained from \eqref{eqn:GGCRicianIDstep1} as 
\begin{align}
\gamma_{X}(s) = \frac{K s}{1+K+s}+ \log\left(1+ \frac{s}{1+K} \right)\;.
\end{align}
Using Frullani's representation $\log(1+x) = \int_{0}^{\infty}(1-\exp(-x u)) \exp(-u)/u \D u$, for $ x \geq 0$ \cite[p. 6]{book:lebedev72}, it is straightforward to show that $\gamma_{X}(s)$ can be written in the form \eqref{eqn:bern_characterize} with $a=b=0$ and 
\begin{align}
\label{eqn:GGCRicianIDTau}
\tau(s) = \exp(-(1+K)s)\left( K(1+K)+\frac{1}{s}\right);.
\end{align}
Thus, $\gamma_{X}(s)$ is a Bernstein function, and consequently, $X$ is an ID RV.
However, this is not a GGC because the Laplace exponent $\gamma_{X}(\cdot)$ is not a Thorin-Bernstein function. To see this, recall from Section \ref{sec:MathPreliminaries} that a Bernstein function with Levy measure $\tau$ is a Thorin-Bernstein function, if and only if $s \tau(s)$ is a {\cm} function. However, by differentiating $s \tau(s)$, where $\tau(s)$ is defined in \eqref{eqn:GGCRicianIDTau}, it is seen that the first derivative is not non-positive at all values of $s$, which shows that $s \tau(s)$ is not {\cm}.

The instantaneous {\effch} corresponding to Rician-$K$ fading is therefore an ID RV, but not a GGC. 
\subsubsection{Nakagami-$m$ Fading}
\label{sec:GammaDistGGC}
Nakagami-$m$ fading envelop, which corresponds to a Gamma distributed instantaneous SNR, is used to model line of sight channels. In addition, the Gamma distribution has also been used to model long-term fading effects in the literature \cite{alouini1999unified}. Moreover, the exponential distribution, which is a special case of a Gamma RV with unit shape and rate parameters, arises as the instantaneous fading power distribution in Rayleigh fading channels. 

The instantaneous {\effch} $X$ of Nakagami-$m$ fading is Gamma distributed with shape $m$ and rate $m$\cite{tega04}. In this case, $X$ is a HCM, GGC and ID RV. To see that $X$ is a GGC, observe that the LT of $X$ is given by $\MGF{X}{s} = \exp(-m \log(1+s/m))$, which satisfies the representation of the LT of a GGC specified in \eqref{eqn:MGF_GGC_gen}. It is then straightforward to show using \eqref{eqn:MGF_GGC_gen} that the density of the Thorin measure in this case, is given by
\begin{align}
\label{eqn:gammaThorinMeas}
\D\mu(u) = m \Ind{\lbrace m \rbrace}\;,
\end{align}
and the Thorin mass is $m$, as can be verified by rewriting the PDF in the form \eqref{eqn:GGCPDF_canonical}. Further, $h(\cdot)$ is given by $h(x) = m \exp(-m x)/\Gamma(m)$. The proof that $X$ is a HCM RV is given in \cite[p. 75]{book:bondessonGGC}, and consequently a GGC and ID RV. 

\subsubsection{Nakagami-$q$ (Hoyt) Fading}
Nakagami-$q$ (Hoyt) fading is observed in scenarios such as satellite communications subject to ionospheric scintillation \cite{chytil1967distribution}. Special cases of Hoyt fading are Rayleigh fading ($q=1$) and one-sided Gaussian ($q=0$). 

The in-phase and quadrature components of the complex baseband fading RV for the case of Nakagami-$q$ fading are independent zero mean Gaussian RVs with unequal variances, such that the ratio of the quadrature variance to the in-phase variance equals $q$ \cite{hoyt1947probability, hajri2009study}. Further, the average channel power is unity if the sum of the variances is $1$. As a result, the instantaneous {\effch} $X$ can be written as
\begin{align}
\label{eqn:GGCHoytFadingX}
X = \frac{1}{1+q^{2}}W_{1}^{2}+ \frac{q^{2}}{1+q^{2}}W_{2}^{2} \;,
\end{align}
where $W_{1} \sim \mathcal{N}(0,1)$ and $W_{2} \mathcal{N}(0,1)$ are independent, and $0 \leq q \leq 1$. 

It is now argued that $X$ is a GGC and hence also ID. Toward this end, first observe that $X_{i}$, $i=1,2$ is a GGC, because its Laplace exponent is given by $\gamma(s) = (1/2)\log(1+2s)$, which is a Thorin-Bernstein function. It then follows from \eqref{eqn:GGCHoytFadingX} that $X$ is a GGC, because GGCs are closed under nonnegative scaling of RVs and addition of RVs. Moreover, using the identity $\MGF{aX}{s} = \MGF{X}{as}$ for any nonnegative RV $X$, it can be shown that the density of the Thorin measure is given by
\begin{align}
\label{eqn:GGCHoytThorinMEas}
\frac{\D\mu(u)}{\D u} = \frac{1}{2(1+q^{2})}\Ind{ \lbrace \frac{1+q^{2}}{2} \rbrace} + \frac{q^{2}}{2(1+q^{2})}\Ind{ \lbrace  \frac{1+q^{2}}{2q^{2}} \rbrace}\;,
\end{align}
and the Thorin mass is $1$. It is also noted that $X$ is not HCM, because the Thorin measure of a HCM RV is allowed to have at most one atom \cite[p. 88]{book:bondessonGGC}, which is not the case for $X$, as observed from \eqref{eqn:GGCHoytThorinMEas}. Therefore, $X$ for Hoyt fading is a GGC and ID RV, but not a HCM RV.

\subsubsection{General Lognormal Distribution}
The lognormal distribution is commonly used to model the long-term shadowing effect in wireless channels \cite{simon_alouini05}.

The density of the general lognormal RV is given by
\begin{align}
\label{eqn:logNormalPDF}
\pdf{X}{x} = c x^{\mu/\sigma^{2} -1}\exp\left( -\frac{(\log x)^{2}}{2 \sigma^{2}}\right)\;,
\end{align}
with $\mu \in \re, \sigma >0$, and $c>0$ is a normalizing constant. It has been shown in \cite[p.~74]{book:bondessonGGC} that the general lognormal distribution belongs to $\HCM$, and hence also $\GGC$ and $\ID$. Although the Thorin measure is unknown, and the diversity order can be calculated to be $\infty$ \cite[p.~74]{book:bondessonGGC}. 

In wireless systems, when multiple channels with the lognormal shadowing and $\HCM$ fading effects combining together on each individual channel, the total received channel gain using MRC can be shown to be $GGC$ RV using Lemma \ref{lem:props_GGC_HCM}.  
\subsubsection{Generalized Gamma (Stacy) Distribution}
The instantaneous {\effch} $X$ is generalized Gamma distributed, when the envelope of the fading amplitude is modelled as a generalized Nakagami-$m$ RV \cite{alouini1999unified}. The Weibull distribution is a special case of the generalized Gamma distribution, and has been used to approximate the multipath wireless channel from channel measurements \cite{sagias2004channel}. Other distributions which are special cases of the generalized Gamma distribution are the inverse-Gaussian distribution and the Nakagami-$m$ distribution. Moreover, the generalized gamma distribution arises as the instantaneous {\effch} distribution in Nakagami-$m$ fading channels with receiver non-linearity, where the non-linearity is captured as a power parameter, since $X$ can be written as \cite[p. 13]{book:bondessonGGC}
\begin{align} 
\label{eqn:GenGammaAndGamma}
X=Y^{1/r}\;, 
\end{align}
where $Y$ is Gamma distributed with shape parameter $\epsilon/r$ and rate $c_{2}$, denoted as $Y \sim Gamma(\epsilon/r,c_{2})$. 

The density of $X$ is given by 
\begin{align}
\label{eqn:GenGammDist}
\pdf{X}{x} = c_{1} x^{\epsilon -1} \exp\left( -c_{2} x^{r}\right)\;,
\end{align}  
where $0 < |r| \leq 1$, $c_{1},c_{2} \geq 0$, and $\epsilon /r >0$. 

It is now shown that this distribution belongs to $\HCM$ and consequently $\GGC$ and $\ID$. Recall that the Gamma distribution belongs to $\HCM$, as shown in Section \ref{sec:GammaDistGGC}. Using Lemma \ref{lem:props_GGC_HCM}, it is then seen that the generalized gamma distribution also belongs to $\HCM$, and is therefore a GGC. While the Thorin measure for the general case is unknown, the diversity order is equal to $\epsilon$ if $r>0$, and $\infty$ otherwise, as can be verified by representing \eqref{eqn:GenGammDist} in the form \eqref{eqn:GGCPDF_canonical}, and further, the function $h(\cdot)$ in \eqref{eqn:GGCPDF_canonical} is given by $h(x) = c_{1}\exp(-c_{2}x^{r})$. 
\subsubsection{Product of Generalized Gamma Random Variables}
The product of independent generalized Gamma RVs has been proposed as a unified fading model in \cite{sagias2006performance}. As a special case of this family of RVs, product of $N$ independent Gamma RVs is obtained, which is the instantaneous channel power distribution in the generalized-$K$ fading model \cite{shankar2004error}. Another special case of this distribution is the $N*$Nakagami fading model \cite{paper:Nstarnakagami, andersen2002statistical}, where the envelope is  a product of $N$ independent but not necessarily identically distributed Nakagami-$m$ RVs.

Consider $X$ to be a product of $N$ independent but not necessarily identically distributed generalized Gamma distributed RVs. This distribution is HCM as observed from an application of Lemma \ref{lem:props_GGC_HCM} and the fact that the generalized Gamma distribution belongs to $\HCM$. As a consequence, the distribution is also in $\GGC$ and $\ID$.
%
%
\subsubsection{Spherically Invariant Random Process}
The spherically invariant random process (SIRP) model has been proposed as a unified model for the fading envelope distribution in the literature \cite{yao2004unified}. In this case, the instantaneous {\effch} $X$ is given by 
\begin{align}
X = A E\;,
\end{align}
where $E$ is an exponential RV, and $A$ is a positive valued RV. 

It is now shown that $X$ for this case is always an ID RV, and is GGC and HCM for certain special cases. To see that $X$ is ID, observe that $X$ is a variance mixture of exponentials, which is ID, according to \cite[Theorem 2.4.3]{book:bondessonGGC}. Not every member of SIRP is a GGC. A simple counterexample is the case of Rician fading, which is a member of SIRP according to \cite{yao2004unified}, and not a GGC as seen in Section \ref{sec:GGCRicianexample}. The special cases of SIRP which are GGC RVs and HCM RVs are now considered. $X$ is a GGC, whenever $A$ is a GGC, according to Lemma \ref{lem:props_GGC_HCM}. Also, $X$ is a HCM RV for all such members of SIRP where $A$ is a HCM RV. 
\subsubsection{Positive Stable Distribution}
Positive stable distributions have been used to model the interference at the primary receiver in a cognitive radio network, when the interfering secondary terminals are distributed in a Poisson field, and there is a guard zone around the primary receiver \cite{hong2008interference}.

The positively skewed stable distribution is a heavy tailed distribution, which is characterized by its LT as $\MGF{X}{s} = \exp(-s^{r})$, $0<r \leq 1$. No closed form expression for the distribution or density is known in general. Nevertheless, it is known that this distribution is a GGC \cite[p. 35]{book:bondessonGGC}, and the density of the Thorin measure is given by 
\begin{align}
\label{eqn:posStableThorinMeas}
\frac{\D\mu(u)}{\D u} =\frac{r \sin r \pi}{\pi}u^{r-1}\;,
\end{align}
and diversity order equal to $\infty$. 

\subsubsection{Pareto Distribution}
Pareto RVs are heavy-tailed distributions used to model signal to interference ratios in interference dominated scenarios. A Pareto RV can be written in the form \cite[p. 14]{book:bondessonGGC}
\begin{align}
\label{eqn:paretoIsGammaRatio}
X = \left(\frac{X_{1}}{X_{2}} \right)^{\frac{1}{r}}\;,
\end{align}
where $X_{j} \sim Gamma(k_{j},1)$, $j=1,2$ are independent.
The fact that Pareto RVs are ratios of Gamma RVs raised to a non-negative power, as shown in \eqref{eqn:paretoIsGammaRatio} leads to many wireless systems with Pareto instantaneous {\effch}. For example, consider a system with a transmitter, receiver and an interfering terminal. Suppose the channel between the transmitter and the receiver, and that between the interferer and receiver are both Nakagami-$m$ channels (with possibly different parameters), and the receiver is non-linear with a power nonlinearity. Then from \eqref{eqn:paretoIsGammaRatio}, it is clear that the instantaneous {\effch} is a Pareto RV. It is also noted that special cases of the Pareto distribution with $(k_{1}=k_{2}=1, r>0)$ have been used to model the instantaneous signal to interference power in interference dominated networks \cite{pun07}.

The density of a Pareto RV is given by
\begin{align}
\label{eqn:ParetoPDF}
\pdf{X}{x} = \frac{|r|}{B(k_{2},k_{1})}\frac{x^{k_{1}r-1}}{(1+x^{r})^{k_{1}+k_{2}}}\;,
\end{align} 
where $B(a,b) := \int_{0}^{1} t^{a-1}(1-t)^{b-1} \D t$ is the Beta function, $r \in [-1,1]$, $k_j > 0$, $j=1,2$. 

The Pareto distribution belongs to $\HCM$. This is because Gamma RVs are HCM, and using Lemma \ref{lem:props_GGC_HCM} on \eqref{eqn:paretoIsGammaRatio}, it follows that a Pareto RV is HCM. The Thorin measure is unknown in the general case, however the diversity order is obtained as $k_{1} r$ \cite[p.~74]{book:bondessonGGC}.

As discussed above, almost all of the fading distributions with ID instantaneous {\effch} also belong to $\GGC$. It is therefore reasonable to focus on channels with instantaneous {\effch} in $\GGC$. Hereafter, we refer to channels with GGC distributed {\effch} as GGC channels.
\section{GGC Fading Channels as Gamma Mixtures}
Let $X$ be a GGC with finite diversity order $D$. As discussed in Lemma \ref{lem:GGC_gammaMixture}, $X$ can be written as $AZ$, where $Z$ is Gamma distributed with parameters $(D,1)$, and $A$ is a nonnegative RV independent of $Z$. Using the property $tZ\sim Gamma(D,1/t)$ applicable to Gamma RVs, it is easy to see that $X$ can be written as $\tilde{A}\tilde{Z}$, where $\tilde{Z}$ is Gamma distributed with parameters $(D,D)$, and $\tilde{A}$ is a nonnegative RV independent of $\tilde{Z}$. This representation shows that a GGC with channel with finite diversity order is a Nakagami-$m$ fading channel with shadowing, where $m=D$. This is because the instantaneous {\effch} of a Nakagami-$m$ channel with $m=D$ has the same distribution as $\tilde{Z}$. Moreover, the average {\effch} of the shadowing RV is $1$, since it has been assumed that $\E{}{X}=1$.

The distribution of the shadowing RV $\tilde{A}$ is not trivial to obtain, when only the distribution of $X$ is known. In this case, we adapt the approach used to obtain the mixing distribution for mixture of exponential distributions \cite{yao2004unified}. This method is based on the fact that the Mellin transform of a product of independent RVs is the product of the Mellin transforms of the RVs, and that the distribution and its Mellin transform pair are unique. The Mellin transform of a nonnegative RV $\tilde{A}$ is defined as
\begin{align}
M_{\tilde{A}}(s) = \int\limits_{0}^{\infty}f_{\tilde{A}}(x)x^{s-1}\D s\;,
\end{align}
and $M_{\tilde{AZ}}(s) = M_{\tilde{A}}(s)M_{\tilde{Z}}(s)$. Therefore, the PDF of $\tilde{A}$ can be obtained using the inverse Mellin transform as follows:
\begin{align}
\label{eqn:GGC_mellinTFinv}
f_{\tilde{A}}(x) = M^{-1}\left[ \frac{M_{X}(s)}{M_{\tilde{Z}}(s)}\right]\;.
\end{align}
The Mellin transform of $\tilde{Z}$ can be evaluated as $m^{1-s}\Gamma(s+D-1)/\Gamma(D)$ \cite[p. 312]{bateman1954tables}. Thus, substituting the Mellin transform of $X$ in \eqref{eqn:GGC_mellinTFinv}, the density of $A$ can be obtained. Reference \cite{yao2004unified} uses this approach to obtain $f_{\tilde{A}}(\cdot)$ in terms of Meijer-G functions .
\section{Performance Metrics for GGC Channels}
\label{sec:GGC_performanceMetrics}
In \cite{paper:yuanTauberian}, the diversity and the asymptotic error rate of a wide range of channel distributions are analyzed using the theory of regular variation.  It has been proved that the diversity order is equivalent to the variation exponent of the CDF of the channel power gain at the origin. The asymptotic average error rate characterization applies to a general set of channel distributions. In this Section, with the establishment of the infinitely divisible RVs, asymptotic expressions for average SER of GGC channels and/or the function $h(\cdot)$ defined in \eqref{eqn:GGCPDF_canonical}, and the diversity orders in terms of the Thorin measure  are obtained.
\subsection{Asymptotic Symbol Error Rate}
\label{sec:GGCasympSER}
The average SER $\AvgPe{}{\rho}$ is defined as follows:
\begin{align}
\label{eqn:GGCAvgPeandPeeasy}
\AvgPe{}{\rho} = \E{}{P_{\mathrm{e}}(\rho X)} \;,
\end{align}
where $P_{\mathrm{e}}(\rho X)$ is the instantaneous SER, and is dependent on the constellation of choice.

The analysis of the asymptotic SER of M-PSK and DPSK are considered. In the asymptotically high SNR regime, $\AvgPe{}{\rho}$ is given by 
\begin{align}
\label{eqn:GGCasympSERexp}
\AvgPe{}{\rho} \sim c_{Q} h(\rho^{-1}) \rho^{-D}\;, \rho \rightarrow \infty\;,
\end{align}
where $h(x)$ is slowly varying at $0$ \cite{paper:yuanTauberian}. Using \eqref{eqn:GGCasympSERexp}, generalized definitions quantities related to asymptotic SER are obtained. The diversity order $D$ is the asymptotic slope of the average SER on a log-log plot versus the average SNR, and is defined as the variation exponent of \eqref{eqn:GGCAvgPeandPeeasy} at $\infty$ \cite{paper:yuanTauberian}.
The term $c_{Q}$ in \eqref{eqn:GGCasympSERexp} is a modulation dependent constant, and is given by $c_{Q} = \pi/4$ for the case of DPSK \cite{paper:yuanTauberian}, and 
\begin{align}
c_{Q}= (\pi \sin^{2D}(M^{-1}\pi))^{-1}\int_{0}^{(1-M^{-1})\pi}\sin^{2D}\theta \D \theta\;,
\end{align}
for the case of MPSK \cite{paper:yuanTauberian}. 

Now the average SER performance of a GGC channel at high SNR is now considered, in order to obtain the diversity order in terms of the Thorin measure. 
\begin{proposition}
\label{thm:GGCSER_Linear}
Let a GGC fading channel have diversity order $D < \infty$. Then for DPSK/MPSK, we have $\AvgPe{}{\rho} \sim  c_{Q} h(\rho^{-1}) \rho^{-D}$, as $\rho \rightarrow \infty$, where $c_{Q}$ depends on the modulation scheme, and $h(\cdot)$ is as defined in \eqref{eqn:GGCPDF_canonical} or \eqref{eqn:GGCGammahxandA}.
\end{proposition}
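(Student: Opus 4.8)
The plan is to reduce the statement to the general Tauberian characterization of the asymptotic average SER established in \cite{paper:yuanTauberian} and recalled in \eqref{eqn:GGCasympSERexp}, simply by substituting the canonical form of the GGC density from Lemma \ref{prop:GGC_density}. First I would note that, since the {\effch} $X$ is a non-degenerate GGC with diversity order $D<\infty$, Lemma \ref{prop:GGC_density} writes its density as $\pdf{X}{x}=x^{D-1}h(x)$ on $\re^{+}$, with $h$ completely monotone and slowly varying at $0$; moreover, by Lemma \ref{lem:GGC_thorinmass} this exponent $D$ is precisely the Thorin mass $\int_{0}^{\infty}\mu(\D s)$, so that the Thorin-mass notion of diversity order agrees with the variation-exponent notion used in \eqref{eqn:GGCasympSERexp}. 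Consequently $\pdf{X}{\cdot}$ is regularly varying at the origin with variation exponent $D-1$ and slowly varying factor $h$, and by Karamata's theorem (equivalently, the integration rule stated in the Definition of Section \ref{sec:MathPreliminaries}) $\CDF{X}{x}\sim x^{D}h(x)/D$ as $x\to 0^{+}$.

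Next I would invoke the asymptotic error-rate result of \cite{paper:yuanTauberian}. For DPSK and MPSK the instantaneous SER $P_{\mathrm{e}}(\cdot)$ is bounded and completely monotone \cite{paper:adithyaSER}, so $\AvgPe{}{\rho}=\E{}{P_{\mathrm{e}}(\rho X)}$ is finite, and whenever the CDF of the {\effch} is regularly varying at $0$ with exponent $D$, the average SER is regularly varying at $\infty$ with exponent $-D$ and $\AvgPe{}{\rho}\sim c_{Q}\,h(\rho^{-1})\,\rho^{-D}$ as $\rho\to\infty$. The constant $c_{Q}$ is the modulation-dependent quantity recorded earlier ($\pi/4$ for DPSK and the displayed $\sin^{2D}$-integral for MPSK), which one extracts by evaluating $\lim_{\rho\to\infty}\rho^{D}\int_{0}^{\infty}P_{\mathrm{e}}(\rho x)\,x^{D-1}\,\D x$. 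Substituting the GGC density into this general result and identifying the slowly varying factor with the very $h$ of \eqref{eqn:GGCPDF_canonical} — which by Lemma \ref{lem:GGC_gammaMixture} also equals $(\Gamma(D))^{-1}\E{}{A^{-D}\exp(-x/A)}$ as in \eqref{eqn:GGCGammahxandA} — then yields the claim.

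The only delicate point, and hence the main obstacle, is confirming that the slowly varying function appearing in \eqref{eqn:GGCasympSERexp} may indeed be taken to be exactly $h$ rather than some asymptotically equivalent rescaling of it: this amounts to tracking the constant $1/D$ relating the slowly varying factor of the density to that of the CDF and checking it has been absorbed consistently into the stated $c_{Q}$. Since \cite{paper:yuanTauberian} pins down both $c_{Q}$ and its reference slowly varying factor, this is a bookkeeping verification rather than a new argument. Two further routine checks are: (i) handling the two regimes uniformly — when $\int_{1}^{\infty}\log u\,\mu(\D u)<\infty$, \eqref{eqn:hofzero} gives $h(0^{+})$ finite and $h(\rho^{-1})\to h(0^{+})$, so the classical $c_{Q}h(0^{+})\rho^{-D}$ decay is recovered, whereas when $h(0^{+})=\infty$ the factor $h(\rho^{-1})$ genuinely varies and must be retained; and (ii) justifying the interchange of limit and integral used in extracting $c_{Q}$ and in the Tauberian step, which follows from dominated convergence using the slow variation of $h$ together with the boundedness of $P_{\mathrm{e}}$.
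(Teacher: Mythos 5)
Your proposal is correct and follows essentially the same route as the paper: the paper's proof simply invokes the asymptotic SER theorem of \cite{paper:yuanTauberian} (which requires the density to behave as $x^{D-1}h(x)$ with $h$ slowly varying at $0$) and verifies that hypothesis via Lemma \ref{prop:GGC_density}. The additional bookkeeping you describe (Karamata's theorem for the CDF, the finite versus infinite $h(0^{+})$ regimes, dominated convergence) is sound but is subsumed in the cited theorem and not spelled out in the paper.
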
 
\begin{IEEEproof}
According to \cite[Theorem 4]{paper:yuanTauberian}, $\AvgPe{}{\rho}$ can be written as $c_{Q} h(\rho^{-1}) \rho^{-D}$, as $\rho \rightarrow \infty$ for DPSK and MPSK, if the asymptotic density of $X$ has the form $\pdf{X}{x} = x^{D-1}h(x)$, as $x \rightarrow 0$. GGCs with finite diversity order satisfy this condition as seen from Lemma \ref{prop:GGC_density}.
\end{IEEEproof}
It is observed from Proposition \ref{thm:GGCSER_Linear} that the diversity order of a GGC channel is as follows.
\begin{corollary}
The diversity order of DPSK/MPSK over a GGC fading channel with diversity order $D < \infty$ is equal to $D$. 
\end{corollary}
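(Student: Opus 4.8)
The plan is to read off the corollary directly from Proposition \ref{thm:GGCSER_Linear} together with the definition of diversity order as the variation exponent of the average SER at infinity. Recall that, as stated in Section \ref{sec:GGCasympSER}, the diversity order of a modulation scheme over a fading channel is the quantity $D'$ for which $\lim_{\rho \rightarrow \infty} \AvgPe{}{\rho t}/\AvgPe{}{\rho} = t^{-D'}$ for every $t > 0$. So the task reduces to showing that, for DPSK/MPSK over a GGC channel with diversity order $D < \infty$, this limit equals $t^{-D}$.

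First I would invoke Proposition \ref{thm:GGCSER_Linear} to write $\AvgPe{}{\rho} \sim c_{Q} h(\rho^{-1}) \rho^{-D}$ as $\rho \rightarrow \infty$, where $h(\cdot)$ is the c.m. function appearing in the canonical GGC density \eqref{eqn:GGCPDF_canonical}, which by Lemma \ref{prop:GGC_density} is slowly varying at $0$. Then I would form the ratio and use multiplicativity of the asymptotic equivalence $\sim$:
\begin{align}
\frac{\AvgPe{}{\rho t}}{\AvgPe{}{\rho}} \sim \frac{c_{Q} h((\rho t)^{-1}) (\rho t)^{-D}}{c_{Q} h(\rho^{-1}) \rho^{-D}} = t^{-D} \cdot \frac{h(t^{-1}\rho^{-1})}{h(\rho^{-1})} \;.
\end{align}
Setting $x = \rho^{-1}$, which tends to $0$ as $\rho \rightarrow \infty$, slow variation of $h$ at $0$ gives $\lim_{\rho \rightarrow \infty} h(t^{-1}\rho^{-1})/h(\rho^{-1}) = \lim_{x \rightarrow 0} h(t^{-1} x)/h(x) = 1$ for every $t > 0$. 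Hence $\lim_{\rho \rightarrow \infty} \AvgPe{}{\rho t}/\AvgPe{}{\rho} = t^{-D}$, so the variation exponent of $\AvgPe{}{\rho}$ at $\infty$ is $-D$, and the diversity order equals $D$.

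There is no real obstacle here; the corollary is essentially a restatement of Proposition \ref{thm:GGCSER_Linear}. The only point requiring a word of care is that the slowly varying factor $h(\rho^{-1})$ does not contribute to the variation exponent, which is exactly the defining property of slow variation and is guaranteed by Lemma \ref{prop:GGC_density}. One should also note in passing that the finiteness of $D$ is what makes the canonical representation \eqref{eqn:GGCPDF_canonical} available in the first place, so the hypothesis $D < \infty$ is used implicitly through Proposition \ref{thm:GGCSER_Linear}.
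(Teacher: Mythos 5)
Your proof is correct and is exactly the argument the paper intends: the corollary is stated as an immediate observation from Proposition \ref{thm:GGCSER_Linear}, and you have simply filled in the (omitted) detail that the slowly varying factor $h(\rho^{-1})$ drops out of the ratio defining the variation exponent, which is guaranteed by Lemma \ref{prop:GGC_density}.
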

The asymptotic SER obtained in Proposition \ref{thm:GGCSER_Linear} applies to any GGC fading channel with finite diversity order, and shows the dependence of the parameters of the GGC on the asymptotic SER parameters. For GGC channels satisfying a certain condition on the Thorin measure, the asymptotic SER can be written in the form $\AvgPe{}{\rho} \sim {\cg{}\rho}^{-D}$, where $\cg{}$ is known as the array gain, which  represents the shift of the average SER curve to the right on a log-log plot
\begin{proposition}
\label{prop:GGCasympSERalt}
Let the Thorin measure of a GGC fading channel with finite diversity order $D$ satisfy $\int_{1}^{\infty} (\log u) \mu(\D u) < \infty$. Then $\AvgPe{}{\rho} \sim (\cg{}\rho)^{-D}$, as $\rho \rightarrow \infty$, where 
\begin{align}
\label{eqn:GGCarraygain}
\cg{}= \left[ \frac{c_{Q}}{\Gamma(D)}\exp\left( \int\limits_{0}^{\infty} (\log u) \mu(\D u) \right) \right]^{-\frac{1}{D}}\;.
\end{align}
\end{proposition}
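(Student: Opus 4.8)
The plan is to combine Proposition~\ref{thm:GGCSER_Linear} with the refined information on the slowly varying factor $h(\cdot)$ furnished by Lemma~\ref{prop:GGC_density}. First I would invoke Proposition~\ref{thm:GGCSER_Linear}, which already gives $\AvgPe{}{\rho} \sim c_{Q} h(\rho^{-1})\rho^{-D}$ as $\rho \to \infty$ for DPSK/MPSK, where $h(\cdot)$ is the completely monotone, slowly varying factor in the canonical density representation \eqref{eqn:GGCPDF_canonical}. The only remaining task is to replace $h(\rho^{-1})$ by its limiting value $h(0+)$ and to identify the resulting constant with $\cg{}^{-D}$.

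Second, under the hypothesis $\int_{1}^{\infty}(\log u)\mu(\D u) < \infty$, Lemma~\ref{prop:GGC_density} states precisely that $h(0+)$ is finite and equals $\Gamma(D)^{-1}\exp\left(\int_{0}^{\infty}(\log u)\mu(\D u)\right)$, which is moreover strictly positive. Hence $h(\rho^{-1}) \to h(0+) \in (0,\infty)$ as $\rho \to \infty$, so that $c_{Q} h(\rho^{-1})\rho^{-D} \sim c_{Q} h(0+)\rho^{-D}$; by transitivity of the relation $\sim$ we obtain $\AvgPe{}{\rho} \sim c_{Q} h(0+)\rho^{-D}$.

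Third, a purely algebraic step finishes the argument: substituting the expression for $h(0+)$ yields $c_{Q} h(0+) = \frac{c_{Q}}{\Gamma(D)}\exp\left(\int_{0}^{\infty}(\log u)\mu(\D u)\right)$, which is exactly $\cg{}^{-D}$ for $\cg{}$ as defined in \eqref{eqn:GGCarraygain}. Therefore $\AvgPe{}{\rho} \sim \cg{}^{-D}\rho^{-D} = (\cg{}\rho)^{-D}$, as claimed.

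I do not anticipate a genuine obstacle. The one delicate point is justifying the replacement of the slowly varying factor $h(\rho^{-1})$ by the constant $h(0+)$; this is legitimate precisely because the hypothesis $\int_{1}^{\infty}(\log u)\mu(\D u) < \infty$ guarantees, via Lemma~\ref{prop:GGC_density}, that $h(0+)$ is finite and positive, so the limit is a genuine nonzero constant and the asymptotic equivalence is preserved. Without this condition $h(0+)$ may be infinite and the asymptotics retain a nontrivial slowly varying correction, which is exactly why the general statement in Proposition~\ref{thm:GGCSER_Linear} is phrased with $h(\rho^{-1})$ rather than a fixed array gain.
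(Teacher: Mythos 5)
Your proposal is correct and follows essentially the same route as the paper: invoke Proposition~\ref{thm:GGCSER_Linear}, use the hypothesis $\int_{1}^{\infty}(\log u)\mu(\D u)<\infty$ together with Lemma~\ref{prop:GGC_density} to conclude $h(\rho^{-1})\to h(0+)=\Gamma(D)^{-1}\exp\bigl(\int_{0}^{\infty}(\log u)\mu(\D u)\bigr)\in(0,\infty)$, and then identify $c_{Q}h(0+)$ with $\cg{}^{-D}$. Your explicit remark that $h(0+)$ must be strictly positive (guaranteed here since the Thorin-measure integrability near $0$ keeps the exponent from being $-\infty$) is a small but worthwhile precision that the paper leaves implicit.
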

\begin{proof}
From Proposition \ref{thm:GGCSER_Linear},
\begin{align}
\label{eqn:GGC_asympSERalteqn1}
\lim_{\rho \rightarrow \infty} \frac{c_{Q}^{-1}\rho^{D}\AvgPe{}{\rho}}{h(\rho^{-1})} =1\;.
\end{align}
The limit of the ratio in \eqref{eqn:GGC_asympSERalteqn1} can be written as a ratio of limits of the numerator and denominator, if and only if $\int_{1}^{\infty} (\log u) \mu(\D u) < \infty$, because this is the condition under which $\lim_{\rho \rightarrow \infty}h(\rho^{-1})< \infty$, as can be seen through an application of Lemma \ref{prop:GGC_density}. Therefore, \eqref{eqn:GGC_asympSERalteqn1} simplifies to
\begin{align}
\label{eqn:GGC_asympSERalteqn2}
\lim_{\rho \rightarrow \infty} c_{\mathrm{M}}^{-1}\rho^{D}\AvgPe{}{\rho} = \lim_{\rho \rightarrow \infty} h(\rho^{-1})\;.
\end{align}
The right hand side of \eqref{eqn:GGC_asympSERalteqn2} is given by $\Gamma(D)^{-1}\exp(\int_{0}^{\infty} (\log u) \mu(\D u))$, as seen from Lemma \ref{prop:GGC_density}. Substituting this in \eqref{eqn:GGC_asympSERalteqn2} and rearranging,
\begin{align}
\label{eqn:GGC_asympSERalteqn3}
\lim_{\rho \rightarrow \infty} \frac{\AvgPe{}{\rho}}{\rho^{-D}c_{\mathrm{M}}\Gamma(D)^{-1}\exp(\int_{0}^{\infty} (\log u) \mu(\D u)) } =1\;.
\end{align}
This proves the Proposition.
\end{proof}
According to Proposition \ref{prop:GGCasympSERalt}, the slow varying function $h(x)$ in Proposition \ref{thm:GGCSER_Linear} becomes a constant at $0$. This conclusion is not necessarily true for fading distributions which are not GGCs, or are GGCs which do not satisfy $\int_{1}^{\infty} (\log u) \mu(\D u) < \infty$. An example of a fading distribution for which the slow varying function $h$ does not become a constant at $0$ is the case of generalized $K$-fading, as discussed in \cite{paper:yuanTauberian}. 

It can be seen from \eqref{eqn:GGCarraygain} that the {\codinggain} is inversely proportional to $h(0+)$, where $h(\cdot)$ is as defined in \eqref{eqn:GGCPDF_canonical}, because the term $\Gamma(D)^{-1} \exp(\int_{0}^{\infty} (\log u) \mu(\D u))$ equals $h(0+)$, according to Lemma \ref{prop:GGC_density}. The {\codinggain} will be useful in calculating the performance difference between two GGC fading channels at high SNR, as discussed in Section \ref{sec:GGCperformanceGap}.


\subsection{Quantifying Asymptotic SER Performance Gaps}
\label{sec:GGCperformanceGap}
In the absence of a unified fading model, it is difficult to obtain the SNR gain of one fading distribution with respect to another, without explicitly obtaining the asymptotic SER expressions for both channels. In what follows, the properties of GGCs make it possible to produce a closed form expression for the SNR gain of one GGC with respect to another, when comparing the SERs at high SNR. While stochastic ordering techniques, which will be illustrated in Section \ref{sec:OrderingGGCs}, are capable of yielding GGC channel comparisons based on very general performance metrics without the need for closed form expressions at every value of average SNR, an inherent limitation of this method is that it is not possible to quantify the exact performance gap of one channel with respect to the other.  

The SNR gain of one channel with respect to another is defined as the difference between the SNR (in dB) for the asymptotic SER of the two channels to be equal, and is denoted as $\snrgap{X_{1},X_{2}}$. Mathematically, 
\begin{align}
\snrgap{X_{1},X_{2}} = \lim_{\rho \rightarrow \infty} 10 \log_{10} \frac{ \AvgPe{(1)}{\rho}}{ \AvgPe{(2)}{\rho}}\;,
\end{align}
where $\AvgPe{(1)}{\cdot}$ and $\AvgPe{(2)}{\cdot}$ are the SERs of the channels with instantaneous {\effch} $X_{1}$ and $X_{2}$ respectively.
\begin{proposition}
\label{thm:GGC_SNRgain}
Let $X_{1}$ and $X_{2}$ correspond to the instantaneous {\effch}s of two GGC channels satisfying $\int_{1}^{\infty}\log u \mu_{j}(\D u) <  \infty$, and $\int_{0}^{\infty}\mu_{j}(\D u) = D$, $j=1,2$. If MPSK or MQAM is employed, the SNR gain is given by
\begin{align}
\label{eqn:snrGapmu}
\snrgap{X_{1},X_{2}} = \frac{10}{D} \left(\log_{10}e \right) \int\limits_{0}^{\infty} (\log u)(\mu_{1}(\D u) - \mu_{2}(\D u))\;,
\end{align}
where $\mu_{j}(\cdot)$ is the Thorin measure of $X_{j}$, $j=1,2$. Equivalently, 
\begin{align}
\label{eqn:snrGaphFunction}
\snrgap{X_{1},X_{2}} = \frac{10}{\DivOrd{}} \log_{10} \frac{h_{1}(0+)}{h_{2}(0+)}\;,
\end{align}
where $h_{j}(\cdot)$ are obtained from the canonical GGC PDF \eqref{eqn:GGCPDF_canonical} of $X_{j}, j=1,2$.
\end{proposition}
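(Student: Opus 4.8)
The plan is to derive the Proposition from the array-gain asymptotics of Proposition~\ref{prop:GGCasympSERalt}, after checking that its hypotheses hold for both channels, and then to rewrite the resulting constants in terms of the two Thorin measures.

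\textbf{Step 1.} First I would verify that Proposition~\ref{prop:GGCasympSERalt} applies to each $X_{j}$, $j=1,2$. Both channels are GGCs with the \emph{same} finite diversity order $D=\int_{0}^{\infty}\mu_{j}(\D u)$, and both satisfy $\int_{1}^{\infty}(\log u)\mu_{j}(\D u)<\infty$; together with $\int_{0}^{1}|\log s|\,\mu_{j}(\D s)<\infty$ (valid for any Thorin measure), this makes $\int_{0}^{\infty}(\log u)\mu_{j}(\D u)$ finite, so by Lemma~\ref{prop:GGC_density} the slowly varying factor $h_{j}$ in the canonical GGC density has a finite, strictly positive limit $h_{j}(0+)=\Gamma(D)^{-1}\exp\!\big(\int_{0}^{\infty}(\log u)\mu_{j}(\D u)\big)$, as in \eqref{eqn:hofzero}. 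Proposition~\ref{prop:GGCasympSERalt} (and its straightforward MQAM counterpart, obtained from the same Tauberian argument of \cite{paper:yuanTauberian}) then gives $\AvgPe{(j)}{\rho}\sim(\cg{j}\rho)^{-D}$ as $\rho\to\infty$, with $\cg{j}$ as in \eqref{eqn:GGCarraygain}, equivalently $\cg{j}^{-D}=c_{Q}\,h_{j}(0+)$.

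\textbf{Step 2.} Next I would substitute these two equivalents into the definition of $\snrgap{X_{1},X_{2}}$. Because the diversity orders coincide, the leading powers $\rho^{-D}$ cancel in the ratio, so $\AvgPe{(1)}{\rho}/\AvgPe{(2)}{\rho}\to(\cg{2}/\cg{1})^{D}=h_{1}(0+)/h_{2}(0+)$; this is a genuine limit, with no indeterminacy, since both $h_{j}(0+)$ are finite and positive, and the modulation constant $c_{Q}$ cancels because it depends only on $M$ and $D$, which are common to the two channels. Reading $\snrgap{X_{1},X_{2}}$ as the horizontal dB displacement between the two asymptotic SER curves — i.e.\ the dB ratio of the SERs at a common SNR divided by the common asymptotic slope $D$ — gives $\snrgap{X_{1},X_{2}}=(10/D)\log_{10}\!\big(h_{1}(0+)/h_{2}(0+)\big)$, which is \eqref{eqn:snrGaphFunction}.

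\textbf{Step 3.} Finally I would insert $h_{j}(0+)=\Gamma(D)^{-1}\exp\!\big(\int_{0}^{\infty}(\log u)\mu_{j}(\D u)\big)$ from \eqref{eqn:hofzero} into \eqref{eqn:snrGaphFunction}: the $\Gamma(D)$ factors cancel, $\log_{10}(h_{1}(0+)/h_{2}(0+))=(\log_{10}e)\int_{0}^{\infty}(\log u)(\mu_{1}(\D u)-\mu_{2}(\D u))$, and \eqref{eqn:snrGapmu} drops out. I do not expect a genuine obstacle: the heavy lifting — the Tauberian theorem and the canonical GGC density with its $h(0+)$ formula — is already carried by Proposition~\ref{prop:GGCasympSERalt} and Lemma~\ref{prop:GGC_density}. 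The only points needing care are the passage from ``ratio of asymptotic equivalents'' to ``ratio of the limiting constants'' (immediate given finiteness and positivity of $h_{j}(0+)$), the normalization by the common slope in the definition of $\snrgap{\cdot,\cdot}$, and confirming that the MQAM instantaneous SER is handled by the same Tauberian machinery used for MPSK.
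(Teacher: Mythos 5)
Your proposal is correct and follows essentially the same route as the paper: both rest entirely on the array-gain asymptotics of Proposition~\ref{prop:GGCasympSERalt}, equate (or ratio) the two asymptotic SER expressions $(\cg{j}\rho)^{-D}$, cancel the common $c_{Q}$ and $\rho^{-D}$, and then recover \eqref{eqn:snrGapmu} from \eqref{eqn:snrGaphFunction} via \eqref{eqn:hofzero}. Your explicit handling of the normalization by the common slope $D$ (horizontal shift vs.\ vertical dB ratio) is, if anything, slightly more careful than the paper's one-line argument, which works directly with the SNRs $\rho_{1},\rho_{2}$ giving equal asymptotic SER.
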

\begin{IEEEproof}
Relation \eqref{eqn:snrGaphFunction} is proved first. Then \eqref{eqn:snrGapmu} is obtained using the relation \eqref{eqn:hofzero}. Let $\rho_{1}$ and $\rho_{2}$ denote the average SNRs for channels $X_{1}$ and $X_{2}$ required to obtain the same asymptotic SER. In other words, $(\cg{1}\rho_{1})^{- D} + O(\rho_{1}^{-D-1}) =   (\cg{2}\rho_{2})^{- D} + O(\rho_{2}^{-D-1})$. Relation \eqref{eqn:snrGaphFunction} then follows by ignoring the higher order terms, taking logarithms both sides, and substituting the {\codinggain} expressions from \eqref{eqn:GGCarraygain}. 
\end{IEEEproof}

\section{Stochastic Ordering of GGC Distributions}
\label{sec:OrderingGGCs}
\subsection{Laplace Transform Ordering of GGC}
The performance of wireless systems are quantified by averaging a metric (e.g. bit or symbol error rates, or channel capacity) over the distribution of the random channel. Quantifying the system performance relies on single parameter comparisons between channels using characteristics such as diversity order \cite{simon_alouini05}. This approach does not provide a unified frame work to compare channels across many different performance metrics. The theory of stochastic orders provides a comprehensive framework to compare two RVs. It allows comparing systems using different metrics with properties such as monotonicity, convexity, and complete monotonicity, which shed light into the connections between performance metrics such as error rates and ergodic capacity.

In this Section, the stochastic ordering of GGC fading channels is considered. This will help in comparing two GGC fading channels based on general performance metrics which are either completely monotone (such as SERs), or possess a completely monotone derivative (such as the ergodic capacity). Toward this end, the Laplace transform ordering framework proposed in \cite{paper:adithyaSTpaper} is employed. LT ordering between a pair of instantaneous {\effch} distributions implies that the average SER of a constellation with {\cm} SER, such as the case with MPSK and MQAM, will be ordered at all values of average SNR.  

In the context of LT ordering of GGCs, a duality between the Shannon transform ordering of the Thorin measures with the LT ordering of the fading distributions is straightforward to see:
\begin{proposition}
\label{thm:Lt_Shannon_duality}
Let $X$ and $Y$ be two GGCs with Thorin measures $\mu_{X}$ and $\mu_{Y}$. We have $X \orderl{Lt} Y \iff \mu_{X} \orderl{S} \mu_{Y}$.
\end{proposition}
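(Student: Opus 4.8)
The plan is to prove the equivalence $X \orderl{Lt} Y \iff \mu_X \orderl{S} \mu_Y$ by unfolding both sides to integral inequalities in terms of the Thorin measures and then invoking the characterizations already established. First I would recall that, for GGCs with $a=0$, the Laplace transforms are $\MGF{X}{\rho} = \exp(-\gamma_X(\rho))$ and $\MGF{Y}{\rho} = \exp(-\gamma_Y(\rho))$, where $\gamma_X(\rho) = \int_0^\infty \log(1+\rho/u)\,\mu_X(\D u)$ and similarly for $Y$ (this is \eqref{eqn:MGF_GGC_gen}). By definition of the LT order on probability measures, $X \orderl{Lt} Y$ means $\MGF{X}{\rho} \geq \MGF{Y}{\rho}$ for all $\rho \geq 0$, which, since $\exp(-\cdot)$ is strictly decreasing, is equivalent to $\gamma_X(\rho) \leq \gamma_Y(\rho)$ for all $\rho \geq 0$. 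Substituting the Thorin representations, this reads $\int_0^\infty \log(1+\rho/u)\,\mu_X(\D u) \leq \int_0^\infty \log(1+\rho/u)\,\mu_Y(\D u)$ for all $\rho \geq 0$, which is precisely the statement $\mu_X \orderl{S} \mu_Y$ by the definition of the Shannon transform order given in Section~\ref{sec:StochPreliminaries}. The two sides are therefore literally the same set of inequalities, and the equivalence follows.

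The steps in order: (i) write $\MGF{X}{\rho} = \exp(-\gamma_X(\rho))$ using the GGC characterization \eqref{eqn:MGF_GGC_gen}, and likewise for $Y$; (ii) note that $X \orderl{Lt} Y$ is by definition $\MGF{X}{\rho} \geq \MGF{Y}{\rho}$ for all $\rho \geq 0$; (iii) apply the monotonicity of $t \mapsto -\log t$ (equivalently $t \mapsto e^{-t}$) to convert this into $\gamma_X(\rho) \leq \gamma_Y(\rho)$ for all $\rho \geq 0$; (iv) substitute the integral forms of $\gamma_X$ and $\gamma_Y$ and recognize the resulting inequality as the definition of $\mu_X \orderl{S} \mu_Y$. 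Each implication is reversible, giving the "$\iff$".

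The only genuine subtlety — and the step I would be most careful about — is ensuring the Shannon transforms $\int_0^\infty \log(1+\rho/u)\,\mu_j(\D u)$ are finite for all $\rho \geq 0$, so that the Shannon transform order is well-defined and the manipulation $\MGF{X}{\rho} \geq \MGF{Y}{\rho} \iff \gamma_X(\rho) \leq \gamma_Y(\rho)$ is between finite quantities rather than involving $\infty$. For a GGC with $a=0$ the Laplace exponent $\gamma(\rho) = \int_0^\infty \log(1+\rho/u)\,\mu(\D u)$ is finite for every $\rho \geq 0$ — this is exactly the integrability condition $\int_0^1 |\log s|\,\mu(\D s) + \int_1^\infty s\,\mu(\D s) < \infty$ attached to the Thorin representation \eqref{eqn:TBF_characterize}, after the change of variable $u \mapsto 1/u$ in the definition of the Shannon transform — so this finiteness is automatic for any (nondegenerate) GGC and requires no extra hypothesis. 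With that observed, the proof is essentially a one-line unwinding of definitions, and I would present it as such.
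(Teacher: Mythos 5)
Your proof is correct and is essentially the paper's own argument: the paper simply states that the result follows from the definition of the LT order in \eqref{eqn:LT_expectations} applied to GGCs, which is exactly your unwinding of $\MGF{X}{\rho}\geq\MGF{Y}{\rho}$ into $\gamma_{X}(\rho)\leq\gamma_{Y}(\rho)$ via the Thorin representation \eqref{eqn:MGF_GGC_gen}. Your additional remark on the finiteness of the Shannon transforms is a sensible bit of care that the paper leaves implicit.
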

\begin{proof}
The proof of this proposition follows from \eqref{eqn:LT_expectations} applied to GGCs.
\end{proof}

The connection between the LT order and the Shannon transform order as suggested by Proposition \ref{thm:Lt_Shannon_duality} can be exploited in obtaining new ordering relations between GGC fading distributions, by using the properties of one stochastic order to benefit the other. For instance, by observing that $g(x) = \log(1+\rho x)$ is a Bernstein function for $\rho \geq 0$, it is seen that $\mu_{X} \orderl{Lt} \mu_{Y} \implies \mu_{X} \orderl{S} \mu_{Y}$, for cases when the Thorin masses of $X$ and $Y$ are identical. Now, using Proposition \ref{thm:Lt_Shannon_duality}, it is concluded that $\mu_{X} \orderl{Lt} \mu_{Y} \implies X \orderl{Lt} Y$, if the diversity orders of $X$ and $Y$ are equal. Therefore, the generalized LT ordering of the Thorin measures of two GGCs with equal diversity order implies that the average SER performance at all SNR for the first one is better than that of the second at all SNR. 

Laplace transform ordering of GGCs can also be obtained through the observation that any GGC with finite Thorin mass is a gamma variance mixture. This leads to the following Proposition:
\begin{proposition}
\label{thm:LT_GGC_Mixture}
Let $X = A U_{1}$ and $Y = B U_{2}$ be two GGCs, where $A$ and $B$ are nonnegative RVs, independent of $U_{j} \sim Gamma(m_{j},1)$, $j=1,2$. Then $m_{1} \leq m_{2}$ and $A \orderl{Lt} B$ implies $X \orderl{Lt} Y$.
\end{proposition}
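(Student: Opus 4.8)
The plan is to work directly with the Laplace-transform characterization of the LT order: since $X$ and $Y$ are distributions of nonnegative RVs, $X \orderl{Lt} Y$ is equivalent to $\MGF{X}{\rho} \geq \MGF{Y}{\rho}$ for every $\rho \geq 0$. First I would condition on the mixing variables. Because $A$ is independent of $U_{1}$ and $U_{1}\sim Gamma(m_{1},1)$ has Laplace transform $(1+s)^{-m_{1}}$, one gets $\MGF{X}{\rho} = \E{}{(1+\rho A)^{-m_{1}}}$, and likewise $\MGF{Y}{\rho} = \E{}{(1+\rho B)^{-m_{2}}}$. The proposition then reduces to showing $\E{}{(1+\rho A)^{-m_{1}}} \geq \E{}{(1+\rho B)^{-m_{2}}}$ for all $\rho \geq 0$, which I would split into two inequalities chained through the intermediate quantity $\E{}{(1+\rho B)^{-m_{1}}}$. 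All integrands here are bounded by $1$, so there are no integrability issues.

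For the first inequality, $\E{}{(1+\rho A)^{-m_{1}}} \geq \E{}{(1+\rho B)^{-m_{1}}}$, the key observation is that for fixed $\rho \geq 0$ the map $t \mapsto (1+\rho t)^{-m_{1}}$ on $\re^{+}$ is completely monotone: writing $(1+s)^{-m_{1}} = \Gamma(m_{1})^{-1}\int_{0}^{\infty} u^{m_{1}-1}e^{-u}e^{-su}\,\D u$ exhibits it as a nonnegative mixture of decaying exponentials, and composing with the linear map $s = \rho t$ preserves complete monotonicity in $t$. Since $A \orderl{Lt} B$, the equivalence recalled in Section \ref{subsec:LTorder} ($\E{}{\g(A)} \geq \E{}{\g(B)}$ for all completely monotone $\g$) applies to this $\g$ and gives the claim.

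For the second inequality, $\E{}{(1+\rho B)^{-m_{1}}} \geq \E{}{(1+\rho B)^{-m_{2}}}$, I would simply note that $1+\rho b \geq 1$ for every $b \geq 0$, $\rho \geq 0$, so $c^{-m_{1}} \geq c^{-m_{2}}$ whenever $c \geq 1$ and $m_{1} \leq m_{2}$; taking $c = 1+\rho B$ and then expectations gives the inequality. Chaining the two yields $\MGF{X}{\rho} \geq \MGF{Y}{\rho}$ for all $\rho \geq 0$, i.e.\ $X \orderl{Lt} Y$.

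There is no serious obstacle here; the only point needing a moment's care is the complete monotonicity of $t \mapsto (1+\rho t)^{-m_{1}}$ (equivalently, that a $Gamma(m_{1},1)$ Laplace transform is completely monotone and stays so after post-composition with $t \mapsto \rho t$), together with the observation that independence of $A$ from $U_{1}$ is exactly what licenses the conditioning step. One could alternatively invoke Lemma \ref{lem:GGC_gammaMixture}, which already exhibits every finite-diversity GGC as such a Gamma variance mixture, but conditioning on $A$ and $B$ directly is the cleanest route.
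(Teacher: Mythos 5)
Your proof is correct and follows essentially the same route as the paper: condition on the mixing variable, use the complete monotonicity of the conditional Laplace transform together with $A \orderl{Lt} B$, and use the monotonicity of the Gamma Laplace transform in the shape parameter; the only cosmetic difference is that you chain the two inequalities through $\E{}{(1+\rho B)^{-m_{1}}}$ whereas the paper chains them through $\E{}{(1+\rho A)^{-m_{2}}}$.
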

\begin{IEEEproof}
To begin with, observe that the LT of a Gamma RV monotonically decreases with the scale parameter, when the rate is unity. Therefore, if $U_{1}$ and $U_{2}$ are Gamma RVs with shape $m_{1}$ and $m_{2}$ respectively, and unit rate, then $U_{1} \orderl{Lt} U_{2} \iff m_{1} \leq m_{2}$. Let $A$ and $B$ be non-negative RVs independent of $U_{1}$ and $U_{2}$, which satisfy $A \orderl{Lt} B$. Then 
\begin{align}
\label{eqn:GGC_LtProofstep1}
\E{}{\exp(-\rho A U_{1})} &= \E{A}{\E{U_{1}}{\exp(-\rho A U_{1})}}\\
\label{eqn:GGC_LtProofstep2}
& \geq \E{A}{\E{U_{2}}{\exp(-\rho A U_{2})}} \\
\label{eqn:GGC_LtProofstep3}
& \geq \E{B}{\E{U_{2}}{\exp(-\rho A U_{2})}}\;,
\end{align}
where \eqref{eqn:GGC_LtProofstep2} is obtained from \eqref{eqn:GGC_LtProofstep1} by observing that $U_{1} \orderl{Lt} U_{2}$, and $\exp(-x)$ in \eqref{eqn:GGC_LtProofstep1} is a {\cm} function. Similarly, \eqref{eqn:GGC_LtProofstep3} follows from \eqref{eqn:GGC_LtProofstep2}, as $A \orderl{Lt} B$.
\end{IEEEproof}
While Proposition \ref{thm:Lt_Shannon_duality} connects the ordering of the Thorin measures and the LT ordering of the corresponding GGCs, Proposition \ref{thm:LT_GGC_Mixture} enables the LT ordering of pairs of GGC RVs for which the Thorin measure may not be available in closed form. 

\subsection{Comparison of GGC Channels with Equal Diversity Orders}
It is well known that the AWGN (no fading) channel is a benchmark to the performance of any fading channel, with respect to symbol error rates and ergodic capacity. Intuitively, this is because the AWGN channel has infinite diversity order (since a Nakagami-$m$ with $m \rightarrow \infty$ is an AWGN channel). However, when two GGC fading distributions with the same diversity order are to be compared, it is in fact observed that the Nakagami-$m$ fading scenario is the best possible fading channel with respect to symbol error rates of $1$-dimensional or $2$-dimensional constellations (which are convex functions) and the ergodic capacity (which is a concave function), as seen from the following Proposition.
\begin{proposition}
\label{thm:GammaBenchmark}
Let $X$ be a GGC with Thorin mass $D<\infty$ and $E{}{X}=1$. Then $U \orderl{cx} X$, where $U \sim Gamma(D,D)$.
\end{proposition}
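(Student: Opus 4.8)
The plan is to realize $X$ as a scale (``variance'') mixture whose mixing component's law is exactly the benchmark $Gamma(D,D)$, and then to apply Jensen's inequality conditionally. Since $D<\infty$, Lemma \ref{lem:GGC_gammaMixture} writes $X = A Z$ with $Z \sim Gamma(D,1)$ and $A \ge 0$ independent of $Z$; using the scaling property $t Z \sim Gamma(D,1/t)$ of gamma laws, this can be rearranged so that $X$ has the same distribution as $\tilde A\,U$, where $U \sim Gamma(D,D)$ and $\tilde A \ge 0$ is independent of $U$.

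The next step is to pin down $\E{}{\tilde A}$. Because $U$ has mean $D/D=1$ and is independent of $\tilde A$, we get $\E{}{X} = \E{}{\tilde A}\,\E{}{U} = \E{}{\tilde A}$, so the normalization $\E{}{X}=1$ forces $\E{}{\tilde A}=1$; in particular $\tilde A$ is integrable. (Equivalently, $\E{}{\tilde A}=1$ can be read off Proposition \ref{thm:AvgPwr} together with the mean of $U$.) Thus $X$ is $U$ multiplied by an independent nonnegative factor of mean one, and the claim reduces to the standard fact that such a multiplication can only increase convex order.

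The ordering itself is then immediate. Fix a convex function $g$ for which the expectations in the definition of $\orderl{cx}$ exist. For each fixed $u \ge 0$ the map $a \mapsto g(a u)$ is convex, being the composition of the convex $g$ with a linear map; Jensen's inequality applied to the law of $\tilde A$ gives $\E{\tilde A}{g(\tilde A u)} \ge g\!\left( \E{}{\tilde A}\,u \right) = g(u)$. Conditioning on $U$ and using independence,
\[
\E{}{g(X)} = \E{}{g(\tilde A U)} = \E{U}{ \E{\tilde A}{ g(\tilde A U) } } \;\ge\; \E{U}{ g(U) } = \E{}{g(U)},
\]
which is exactly $U \orderl{cx} X$.

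The argument is short, so I do not expect a real obstacle; the only points needing care are (i) arranging the gamma-mixture representation of Lemma \ref{lem:GGC_gammaMixture} with the correct scale so that the mixing component's law is precisely $Gamma(D,D)$, and (ii) the routine measure-theoretic bookkeeping in the conditional Jensen step — a convex $g$ is minorized by an affine function, so $\E{\tilde A}{g(\tilde A u)}$ is well defined in $(-\infty,\infty]$, and if $\E{}{g(X)}=+\infty$ the asserted inequality holds vacuously. Of these, only (i) is worth spelling out in detail.
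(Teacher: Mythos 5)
Your proof is correct and follows essentially the same route as the paper: both invoke Lemma \ref{lem:GGC_gammaMixture} to write $X$ as a gamma variance mixture $\tilde A U$ with $U \sim Gamma(D,D)$, deduce $\E{}{\tilde A}=1$ from the normalization, and apply Jensen's inequality conditionally on $U$. Your write-up is merely more explicit about the rescaling to rate $D$ and the integrability bookkeeping, which the paper leaves implicit.
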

\begin{IEEEproof}
According to Lemma \ref{lem:GGC_gammaMixture}, $X$ can be written as $X=A U$, where $U$ is Gamma distributed with shape $D$ and rate $D$, and $A$ is a nonnegative RV independent of $U$. Now, let $g$ be a convex function. Then $\E{U}{g(\E{}{A} U)} \leq \E{}{g(X)}$, as a consequence of Jensen's inequality. $\E{}{A}=1$, because $\E{}{X}=1$, by assumption. The proof is thus concluded.
\end{IEEEproof}
It is now possible to see from Proposition \ref{thm:GammaBenchmark} that among all GGC fading channels with a given diversity order and unit average power, the Nakagami-$m$ fading channel forms the benchmark channel with respect to convex performance metrics such as SERs of $1$-dimensional or $2$-dimensional constellations, and concave metrics such as the ergodic capacity. This is because the instantaneous channel power of a Nakagami-$m$ channel is Gamma distributed with shape $m$ and rate $m$. In other words, if the diversity order of a GGC is $D$, then the average SER of any $1$-dimensional or $2$-dimensional constellation is lower bounded by that of a Nakagami-$m$ channel with $m=D$, and further, the ergodic capacity is upper bounded by that of the Nakagami-$m$ channel with $m=D$.

\section{Systems Involving Multiple GGC RVs}
\label{sec:GGCPerfBoundsSys}
Many systems of practical interest involve combinations of multiple fading components. In what follows, the conditions under which the overall end-to-end instantaneous channel power is a GGC with finite diversity order is obtained. This permits the comparison of two GGC channel systems using Proposition \ref{thm:LT_GGC_Mixture}. This facilitates the use of Proposition \ref{thm:GGC_SNRgain} to quantify the high SNR gain of one GGC channel with respect to another, since this propositions require finite diversity order. 

\subsection{Composite Fading Systems}
Many fading models in wireless communications attempt to capture both the short term fading effect and the long term shadowing effect through a composite fading distribution. In such cases, the instantaneous {\effch} is modelled as a product of two RVs, one corresponding to the short-term effect, and the other corresponding to the long-term fading effect. In other words, the overall instantaneous {\effch} $Z$ can be written as $Z=XY$. For such systems, $Z$ is a GGC if $X$ is a GGC and $Y$ is HCM, as seen through an application of Lemma \ref{lem:props_GGC_HCM}. It can be observed that the diversity order of $Z$ is finite if the diversity order of either $X$ or $Y$ is finite. As a result, common composite models such as the Rayleigh-lognormal model \cite{turkmani1992probability}, the Nakagami-lognormal model \cite{ho1993co}, Weibull-Gamma composite model \cite{bithas2009weibull}, Generalized-$K$ \cite{shankar2004error}, and the Gamma-shadowed generalized Nakagami fading model \cite{alouini1999unified} have GGC instantaneous {\effch}s with finite diversity order. 

It can then be concluded using Proposition \ref{thm:LT_GGC_Mixture} that, for two different composite fading distributions $Z_{1}$ and $Z_{2}$, each with finite diversity order, if the diversity order of $Z_{1}$ is larger than that of $Z_{2}$, then to establish LT ordering of the composite distribution, it is sufficient to establish LT ordering of the mixing distributions. Further, if the Thorin measures of the two composite distributions satisfy $\int_{0}^{\infty} \log u \mu_{j}(\D u) < \infty$, then Proposition \ref{thm:GGC_SNRgain} can be used to quantify the high SNR gain of one system versus the other.
\subsection{Diversity Combining Systems}
\label{sec:GGCpaperDivComb}
Consider a single-input multiple-output diversity combining system with $L$ receive antennas, and complete CSI at the receiver. It is now proved that under different assumptions on the instantaneous channel power of each branch, the end-to-end instantaneous channel power of combining scheme such as MRC. 
In the case of MRC, the instantaneous end-to-end channel power $X_{\mathrm{MRC}} = \sum_{i=1}^{L} X_{i}$ is a GGC whenever $X_{i}$ is a GGC, since a sum of GGC RVs is a GGC. It is therefore straightforward to see that the average SER of DPSK is given by one-half $\MGF{}{\rho}$ in \eqref{eqn:MGF_GGC_gen}, where the Thorin measure is the sum of Thorin measures of each component of the MRC system. The ergodic capacity and average SERs of other $2$-dimensional modulations may however not be tractable in general, for example when the branches are independent but not identically distributed GGCs. In this case, it is possible to obtain a performance lower bound if the diversity order is known. To this end, if every component has finite diversity order, then the diversity order of the MRC system is finite, and is given by $D = \sum_{i=1}^{N} \int \mu_{i}(\D u)$. Now, since $X_{\mathrm{MRC}}$ is a GGC with diversity order $D < \infty$, using Proposition \ref{thm:GammaBenchmark}, it is seen that the performance of the MRC system is lower bounded by that of a Nakagami-$m$ channel with $m=D$, with respect to all convex metrics such as average SERs of $2$-dimensional modulations and concave metrics such as the ergodic capacity.
\section{Simulations}
\label{sec:simulations}
In this section some of the theoretical results are corroborated using Monte-Carlo simulations. In Figure \ref{fig:NakagamiVsPareto2antM125}, the performance of a $2$-branch MRC system where the instantaneous channel powers are Pareto distributed with parameters $(1,1,1.25)$ is simulated, and compared with that of a SISO Nakagami-$m$ channel with $m=2.5$. The average powers of both the systems have been normalized to unity. This simulation demonstrates that the average SER of the MRC system with Pareto distributed branches, which are GGC's with finite diversity order is lower bounded by that of a Nakagami-$m$ channel, as suggested in Section \ref{sec:GGCpaperDivComb}. 

In Figure \ref{fig:GenGammaVeNaka}, the performance of a Nakagami-$m$ channel with $m=2$ is compared with that of a fading channel with generalized gamma distributed instantaneous {\effch} with parameters ($\epsilon=2$, $r=2$, $c_{2}=1$). The parameters of the two distributions have been chosen such that the diversity order is $2$ for both cases. The high SNR gain in dB obtained from the simulation is found to be $\approx 1.7$ dB. This agrees with the theoretically suggested value of $1.505$ dB obtained from \eqref{eqn:snrGaphFunction}, with $h_{1}(0+) = 4$ and $h_{2}(0+) = 2$.

A simulation to provide an intuitive understanding of the structure of the Thorin measure and its effect on the average SER performance has been provided in Figure \ref{GGCthorinmasscomp2massesdiffloc}. To this end, two different GGCs $X = X_{1}+X_{2}$ and $Y = Y_{1}+Y_{2}$, where $X_{1}, X_{2}$ are independent and gamma distributed with parameters $(2,1)$ and $(2,2)$ respectively, and $Y_{1}, Y_{2}$ are independent and gamma distributed with parameters $(1,1/2)$ and $(3,3)$ respectively, are chosen. In this case, the densities of the Thorin measure are as depicted in Figure \ref{fig:GGCthorinmasscomp2massesdifflocMU}. The average SER for $X$ and $Y$ are obtained in figure \ref{GGCthorinmasscomp2massesdiffloc}. It is observed that the average SER of $X$ is consistently less than that of $Y$ at all $\rho \geq 0$. Therefore, it can be inferred that the smaller the support of the density of the Thorin measure, the better the average SER performance is. 
\begin{figure}[htb]
\centering
\includegraphics[width=8.5cm]{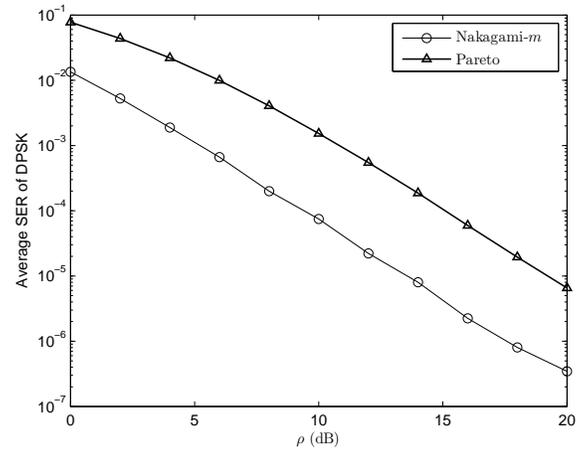}
\caption{Average SER performance of $2$-branch MRC with Pareto($1.25$) distributed SNR compared with that of Nakagami-$m$ SISO channel with $m=2.5$}\label{fig:NakagamiVsPareto2antM125}.
\end{figure}
\begin{figure}[htb]
\centering
\includegraphics[width=8.5cm]{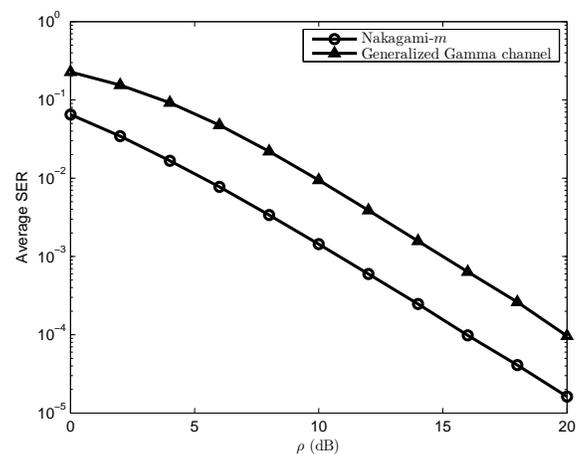}
\caption{Comparison of DPSK performance of Nakagami-$m$ ($m=2$) channel and generalized gamma distributed channel Gamma$(1,1)^{1/2}$ with equal avg. channel powers and diversity orders.}\label{fig:GenGammaVeNaka}.
\end{figure}

\begin{figure}[htb]
\centering
\includegraphics[width=8.5cm]{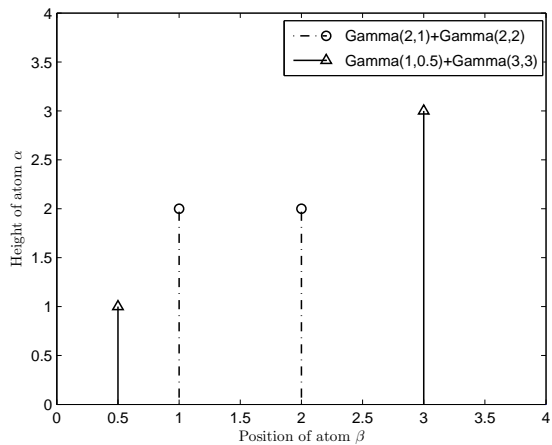}
\caption{Density of Thorin measures for two GGCs. $X = $Gamma$(2,1)+$Gamma$(2,2)$; $Y= $Gamma$(1,0.5)+ $Gamma$(3,3)$. Diversity order = $4$, $\E{}{X}= \E{}{Y}=3$}\label{fig:GGCthorinmasscomp2massesdifflocMU}.
\end{figure}
\begin{figure}[htb]
\centering
\includegraphics[width=8.5cm]{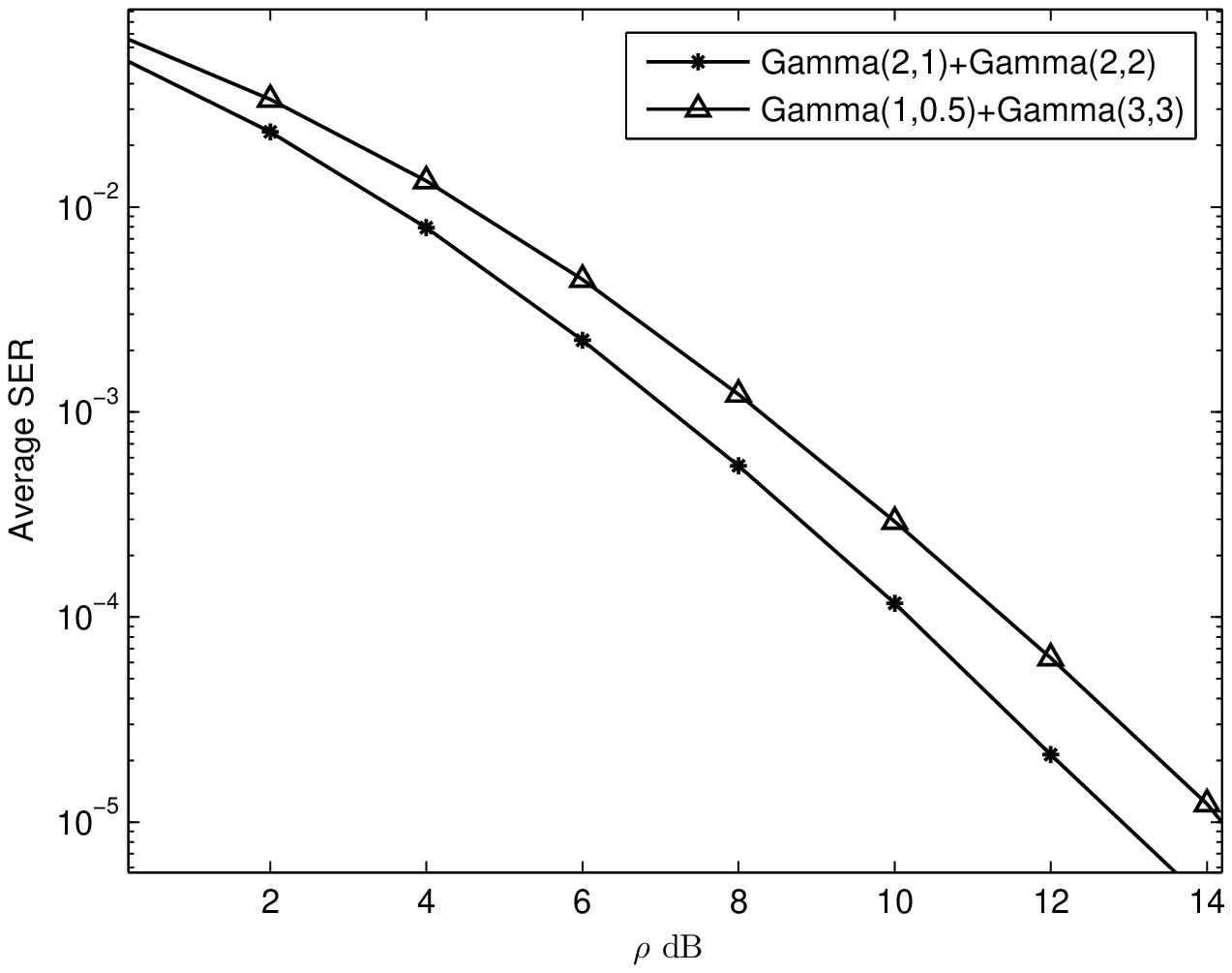}
\caption{SER performance of DPSK over two GGCs which are sums of $2$ gamma RVs with different supports of Thorin measures. The less spread apart $\mu$ is, the better the performance.}\label{GGCthorinmasscomp2massesdiffloc}.
\end{figure}
\section{Conclusions}
\label{sec:conclusions}
In this paper, it has been shown that the class of non-negative ID RVs serves as a unification of commonly used fading distributions such as Rician-$K$, Nakagami-$m$ and Weibull distribution. Furthermore, several unified fading distributions proposed in the literature, such as the Stacy distribution and the SIRP are shown to be included in the class of ID RVs. In addition, it is shown that several ID fading distributions are also GGCs. The properties of ID and GGC RVs find applications in the performance analysis and comparisons of wireless systems. Specifically, the asymptotic average SER is obtained in a canonical form, and its depenence on the Thorin measure of the GGC is revealed. On the other hand, comparing two GGC channels in the LT order is shown to be simplified, by revealing the connection between the Shannon transform ordering of the Thorin measures and the LT ordering of the GGCs. This facilitates comparisons of the average SER or ergodic capacity of the two GGC channels. The properties of GGC RVs also permit us to quantify the asymptotic SER gap for any modulation scheme. Applications of properties of GGCs in diversity combining systems are also provided.

\bibliographystyle{IEEEtran}
\nocite{*}
\bibliography{references}
\vspace{-6mm}
\end{document}